\theoremstyle{remark}
\newtheorem{definition}{Definition}[]
\newtheorem{theorem}{Theorem}[]
\newtheorem{corollary}{Corollary}[]
\newtheorem{lemma}{Lemma}[]
\newtheorem{remark}{Remark}[]
\begin{document}

\preprint{APS/123-QED}

\title{Limits of Fault-Tolerance on Resource-Constrained Quantum Circuits \\for Classical Problems}
\author{Uthirakalyani~G. \thanks{Equal contribution}}
\altaffiliation[Equal contribution.]{}
\affiliation{Department of Electrical Engineering, Indian Institute of Technology, Madras, Chennai 600036, India}

\author{Anuj K.\ Nayak \thanks{Equal contribution}}
\altaffiliation[Equal contribution.]{}
\affiliation{Coordinated Science Laboratory, University of Illinois at Urbana-Champaign, Urbana, IL 61801, USA
}
\thanks{This work was supported in part by National Science Foundation grant PHY-2112890.}

\author{Avhishek Chatterjee}
\affiliation{Department of Electrical Engineering, Indian Institute of Technology, Madras, Chennai 600036, India}

\author{Lav R.\ Varshney}
\affiliation{Coordinated Science Laboratory, University of Illinois at Urbana-Champaign, Urbana, IL 61801, USA
}
\date{\today}

\begin{abstract}

Existing lower bounds on redundancy in fault-tolerant quantum circuits are applicable when both the input and the intended output are quantum states. These bounds may not necessarily hold, however, when the intended output are classical bits, as in Shor's or Grover's algorithms. Here we show that indeed, noise thresholds obtained from existing bounds do not apply to a simple fault-tolerant implementation of the quantum phase estimation algorithm where the output quantum state suffers from noise before the measurement.  Then we obtain the first non-asymptotic lower bound on the minimum required redundancy for fault-tolerant quantum circuits with classical inputs and outputs.  Further, recent results show that due to physical resource constraints in quantum circuits, increasing redundancy  can increase noise, which in turn may render many fault-tolerance schemes useless. So it is of both practical and theoretical interest to characterize the effect of resource constraints on the fundamental limits of fault-tolerant quantum circuits. As an application of our lower bound, we characterize the fundamental limit of fault-tolerant quantum circuits with classical inputs and outputs under resource constraint-induced noise models. 

\end{abstract}

\maketitle

\section{Introduction}

Advantages of quantum computing over classical computing  \cite{Manin1980,Feynman1982}, especially when demonstrated mathematically \cite{deutsch1985quantum, deutsch1992rapid}, have spurred considerable interest. However, noise in quantum circuits heavily restricts the class of problems that can be solved using quantum hardware. Indeed, the formal term  \textit{NISQ} (Noisy Intermediate Scale Quantum) has been introduced to describe the current era where quantum processors are noise-limited \cite{Preskill2018}. 

To limit the corruption of quantum states due to noise,  pursuing fault-tolerant quantum circuits has led to a large literature in quantum error correction. Early papers demonstrated one can achieve arbitrary computational accuracy when physical noise is below a certain threshold. Achievability of any desired fault tolerance required a poly-logarithmic redundancy with respect to the size of the quantum circuit in these initial works  \cite{shor1996fault, steane1996error, aharonov1997fault, kitaev1997quantum}. More recent works extend such \emph{threshold theorems} to require only a  constant overhead \cite{gottesman2014fault, fawzi2020constant}, reminiscent of work in classical fault-tolerant computing \cite{Taylor1968, Varshney2011}. 

In this direction, there are fundamental lower bounds on redundancy for arbitrarily accurate computation \cite{fawzi2022lower, razborov2004upper, kempe2008upper, harrow2003robustness, UthirakalyaniNC2022}. However, all of these lower bounds are for quantum input/output, rather than classical input/output which is common for a large class of algorithms, such as those due to Deutsch-Jozsa \cite{deutsch1992rapid}, Shor \cite{Shor1994}, and Grover \cite{Grover1996}. The bounds in \cite{razborov2004upper,kempe2008upper} can be extended to classical input/output, but under restrictive assumptions on the final measurement. Here, we demonstrate by example that such lower bounds in quantum fault tolerance are not applicable for shallow quantum circuits with classical input/output, and prove a general alternate non-asymptotic bound.  As far as we know, this is the first non-asymptotic lower bound on fault tolerance for quantum circuits with classical input/output.  

The effects of noise on computational accuracy of quantum circuits are typically studied assuming the noise per physical qubit is constant with respect to the size of the circuit. Unfortunately, this is not true in many quantum devices today. Often due to limited physical resources such as energy \cite{ikonen2017energy}, volume \cite{monroe2013scaling}, or available bandwidth \cite{arute2019quantum}, they have physical noise levels that grow as the quantum computer grows \cite{fellous2021limitations}.  Fellous-Asiani, et al.~\cite{fellous2021limitations} introduce physical models of such scale-dependent noise and also aim to extend threshold theorems to this setting. However, the characterization of computational error (per logical qubit error)  is restricted to concatenated codes and does not apply to more general fault-tolerant schemes \cite{fawzi2020constant,gottesman2014fault}. As our lower bound is non-asymptotic, it is also applicable to scale-dependent noise. Using our lower bound and tools from optimization theory, we characterize the limits of scale dependence on fault-tolerant quantum circuits with classical input/output, applicable to any fault tolerance scheme.

The two motivations for the present work are therefore to obtain lower bounds on the required redundancy of a quantum circuit for computation with classical input/output, and to investigate the effect of resource constraints (like energy or volume) on this bound.  

The experimental finding that noise increases with more redundancy under resource constraints implies that simple per (logical) qubit redundancy cannot achieve arbitrary computational accuracy even if noise per physical qubit is below the fault-tolerance threshold, in contrast to conventional threshold theorems \cite{fellous2021limitations}. 
This is due to two opposing forces: improvement in accuracy due to increased redundancy and worse overall noise with redundancy due to scale dependence. In this regard, we find the sweet spot on redundancy for a desired computational accuracy using techniques from resource-limited (finite blocklength) quantum information theory.

The remainder of the paper is organized as follows. Section~\ref{sec:DJ_counter_example} gives a counterexample to illustrate the need for a new redundancy lower bound. Section~\ref{sec:model} gives mathematical models of computation, noise, and resource constraints that form the basis of our analysis. Then, the primary contributions follow. Section~\ref{sec:bounds} proves a non-asymptotic converse bound on redundancy required for classical computation on quantum circuits, drawing on one-shot capacity of classical-quantum channels (Theorem \ref{thm:finiteLB}).
 Section~\ref{sec:scale} analyzes the limits of scale-dependence for fault-tolerant computation, including closed-form and numerical solutions for some canonical noise models.
Finally, Section~\ref{sec:conclusion} concludes.

\section{Why is a New Bound Needed?}
\label{sec:DJ_counter_example}

In this section, we shall demonstrate the need for a new redundancy lower bound for quantum circuits with classical inputs and outputs with the help of a simple noisy computational model of a quantum phase estimation circuit. The quantum phase estimation algorithm is crucial to many important problems like discrete log and factoring. It is used to estimate the $n$-bit phase $\phi$ (with $\phi_1 \phi_2 ... \phi_n$ as the binary representation) of the eigenvalue $e^{j 2 \pi \phi}$ of a unitary operator $U_{\phi}$, given the corresponding eigenvector $\ket{u}$ and a controlled-$U_{\phi}$ \cite{nielsen2002quantum}. 

\begin{figure}[H]
    \centering    
    \includegraphics[scale=0.35]{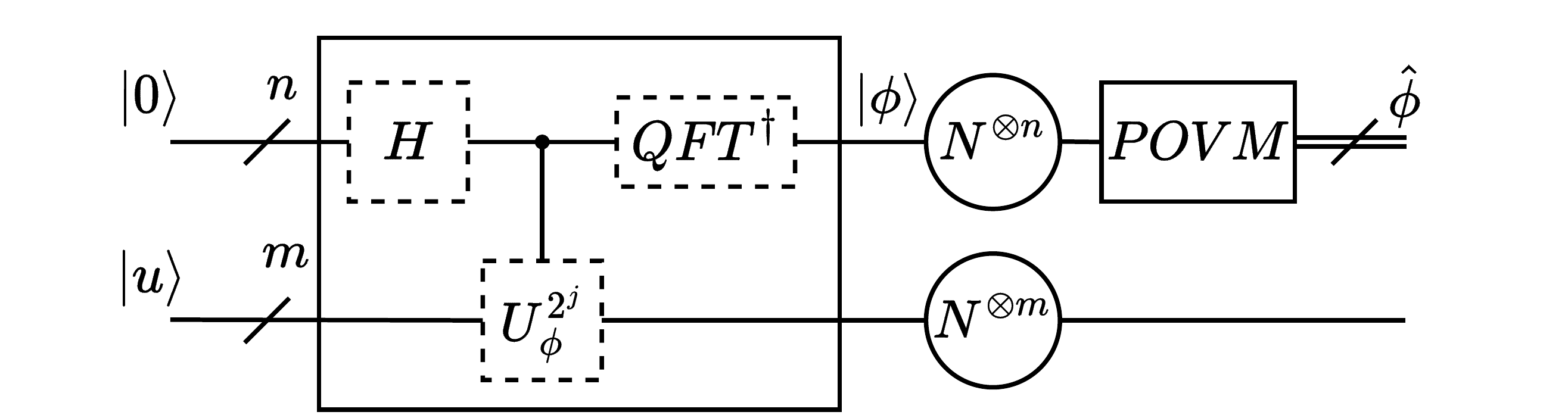}  
    \caption{Phase estimation algorithm in the presence of noise. 
    }
    \label{fig:PE_algo}
\end{figure}

Consider a simple {noisy computation model (in Figure~\ref{fig:PE_algo})}, where the qubits coming from a phase estimation circuit are corrupted independently by depolarizing (or erasure) noise before the final measurement (Please refer to Appendix. \ref{apndx:cptp_holevo_channels} for the definition of depolarizing and erasure channels). 

{Indeed, the noise model considered is a simplistic one, since it does not capture the noise in individual gates involved in quantum phase estimation. However, the existing bounds on quantum noise thresholds (e.g, \cite{fawzi2022lower}) for general purpose quantum computation are applicable to this model. In fact, the noisy computation model in Fig.~\ref{fig:PE_algo} conforms to the model in \cite[\S 1.1]{fawzi2022lower} with the entire quantum phase estimation circuit as a single layer quantum circuit followed by a layer of noise acting on each qubit independently. The sole purpose of this simplistic noise model and also of this section is to exhibit that the bounds for general purpose quantum computation are vacuous in the special but important case considered in this paper, i.e., classical inputs and outputs. We serve this purpose by showing that for a range of noise in this simplistic model where the existing bounds rule out any possibility of reasonably accurate computation, a simple fault-tolerant scheme can achieve arbitrarily high accuracy. Note that the fault-tolerant scheme proposed here for this simplistic noise model may not be useful in other settings.}

When the output of noiseless quantum phase estimation circuit $\ket{\phi}$ undergoes depolarizing noise, each qubit is replaced by a maximally mixed state with probability $p$, which results in each classical bit being flipped with probability $\frac{p}{2}$ independently upon measurement. In the case of erasure noise, each qubit (before measurement) flips to an orthogonal state $\ket{e}$ with probability $p$. Now, consider the following simple modification:
\begin{enumerate}
    \item Run quantum phase estimation algorithm, shown in Fig.~\ref{fig:PE_algo}, $T$ times. 
    \item Perform $T$ independent measurements on each {of the $n$ noisy output qubits $\mathcal{N}^{\otimes n}(\ketbra{\phi})$}. For erasure noise, declare the $i^{\text{th}}$ bit as $0$ (or $1$) if {the measurement outcomes for the corresponding qubit were $\hat{\phi_i} = 0$ (or $1$)} at least once in $T$ runs, and for depolarizing channel, declare the $i^{\text{th}}$ bit as $0$ (or $1$) if {the measurement outcomes for the corresponding qubit were $\hat{\phi_i} = 0$ (or $1$)} in more than $T/2$ runs.
\end{enumerate}

The above scheme can be seen as a fault-tolerant scheme {against the noise in Fig.~\ref{fig:PE_algo}} with redundancy of size $T$, i.e., {using a larger circuit with the algorithm repeated at most $T$ times. } 
We have the following guarantee on the performance of this simple fault-tolerant scheme for any $p<1$. 
\begin{theorem}
\label{thm:phEst}
    For a quantum phase estimation circuit that is corrupted by noise before the final measurement,
    \begin{enumerate}
        \item for depolarizing noise with probability $p$, if we choose $T \geq \frac{2 \ln(n/\epsilon)}{(1-p)^2}$, and 
        \item for erasure noise with probability $p$, if we choose $T \geq \left\vert \frac{\ln(\epsilon/n)}{\ln p}\right \vert$,
    \end{enumerate}
    then the modified circuit discussed above correctly outputs $\phi$ with probability at least $1-\epsilon$.
\end{theorem}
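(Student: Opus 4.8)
The plan is to reduce the $n$-bit problem to $n$ single-bit problems via a union bound and then analyze each noise model with its appropriate per-bit estimator. Writing $q_i$ for the probability that the decoder errs on bit $i$, the union bound gives $\Pr[\text{some bit wrong}] \le \sum_{i=1}^n q_i \le n\max_i q_i$, so it suffices to force $n\max_i q_i \le \epsilon$; this will yield the claimed thresholds on $T$. The only independence I will actually invoke is that, within a fixed bit position, the $T$ measurement records are independent, which holds because the $T$ runs of the algorithm are independent. Independence across bit positions (which follows anyway from the product structure of the computational-basis state $\ket{\phi}$ and the tensor form $\mathcal{N}^{\otimes n}$ of the noise) is not needed for the union bound.

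For the erasure channel, each of the $T$ independent runs returns the true value $\phi_i$ whenever the $i$-th qubit is not erased, an event of probability $1-p$, and returns the flag $\ket{e}$ otherwise. Under the stated rule the decoder recovers bit $i$ correctly as long as at least one run leaves that qubit un-erased, so the only failure mode is that all $T$ runs erase it, which occurs with probability exactly $p^T$. Setting $n\,p^T \le \epsilon$ and taking logarithms (using $\ln p < 0$, so the inequality flips) gives $T \ge \ln(\epsilon/n)/\ln p = \bigl|\ln(\epsilon/n)/\ln p\bigr|$, which is precisely condition (2).

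For the depolarizing channel the text establishes that each run flips bit $i$ independently with probability $p/2$, so a single run is correct with probability $\mu = 1-p/2 > 1/2$ (here $p<1$ is essential). Since the decoder takes a majority vote over the $T$ runs, bit $i$ fails exactly when the empirical fraction of correct runs drops to at most $1/2$, i.e. when the sample mean deviates below $\mu$ by at least $\mu - 1/2 = (1-p)/2$. I would bound this one-sided deviation with Hoeffding's inequality, obtaining $q_i \le \exp(-2T(\mu-1/2)^2) = \exp(-T(1-p)^2/2)$. The union bound then yields $\Pr[\text{error}] \le n\exp(-T(1-p)^2/2)$, and forcing this to be at most $\epsilon$ rearranges to $T \ge 2\ln(n/\epsilon)/(1-p)^2$, which is condition (1).

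The calculations are short and standard, so I do not expect the concentration step itself to be the obstacle. The points that genuinely require care are (i) the translation of the physical channel action into the effective classical bit-flip and erasure probabilities ($p/2$ and $p$, respectively) already identified in the text, and (ii) verifying that the decoder's success criterion maps cleanly onto a tractable probabilistic event — an all-erased event in the erasure case and a one-sided sample-mean deviation in the depolarizing case. I therefore expect the main content to lie in setting up the estimator-level model precisely enough that Hoeffding's bound applies directly for depolarizing noise, while the erasure case becomes essentially immediate once the all-erased failure event is isolated.
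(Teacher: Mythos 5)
Your proposal is correct and follows essentially the same route as the paper's proof: a union bound over the $n$ bit positions, the exact all-erased failure probability $p^T$ for the erasure channel, and a one-sided Hoeffding bound on the majority vote with per-run correctness probability $1-p/2$ for the depolarizing channel, yielding identical constants. The only difference is presentational—you make explicit which independence assumptions are actually used and why $p<1$ is needed—which the paper leaves implicit.
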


\begin{proof}
\textit{Depolarizing noise:}
When a qubit {$\ket{\phi_i} \in \{\ket{0}, \ket{1}\}$ }is subject to a depolarising noise with parameter $p$, then probability of error in detecting {$\phi_i$ (in a single run) is}

{
\begin{equation} 
    \mathbbm{P}\{ \phi_i \neq \hat{\phi_i}\} = Tr\left(\mathcal{N}(\ketbra{\phi_i}) \ketbra{\phi_i}\right) = \frac{p}{2},
\end{equation}
}
where $\hat{\phi_i}$ is the outcome of measurement in the computational basis, and $Tr(\cdot)$ is the trace operation. Now, the probability of error in estimating the $n-$bit phase (after $T$ runs){, denoted by $P_e$, is given by}
\begin{equation} \begin{split}
    P_e 
    &= \mathbbm{P}\{\cup_{i=1}^n\text{Error in $i^{th}$ bit}\}, \\
    &\leq n \, \mathbbm{P}\{\text{Error in one bit}\}. 
\end{split} \end{equation}
The last inequality is due to the symmetry of noise across qubits and union bound. Since we run the algorithm $T$ times and assign the majority to be the estimate,
{
\begin{equation} \begin{split}
    \mathbbm{P}\{\text{Error in one bit}\} &= \mathbbm{P}\{\text{$\phi_i \neq \hat{\phi}_i$ in $\geq T/2$ runs}\},
\end{split} \end{equation}
}

which is the tail of the binomial distribution w.p. $\frac{p}{2}$. Applying Hoeffding's inequality the probability of error is bounded as 
\begin{equation} \begin{split}
     \mathbbm{P}\{\text{Error in one bit}\} &\leq \exp(-2 \left(\frac{1-p}{2} \right)^2 T), \\
     P_e &\leq n \exp(-2 \left(\frac{1-p}{2} \right)^2 T) \leq \epsilon.
\end{split} \end{equation}
Choosing the number of runs to be at least
\begin{equation} \begin{split}
    T \geq \frac{|2 \ln{\frac{n}{\epsilon}}|}{(1-p)^2},
\end{split} \end{equation}
yields $P_e \leq \epsilon$ for any $p \in [0,1)$. 

\textit{Erasure noise:}
For erasure noise with probability of erasure $p$ and input $\ket{\phi_i} \in \{\ket{0}, \ket{1}\}$, the probability of error {in detecting $\phi_i$ is}
\begin{equation} 
    \mathbbm{P}\{ \phi_i \neq \hat{\phi_i}\} = Tr\left(\mathcal{N}(\ketbra{\phi_i}) \ketbra{\phi_i}\right)= p.
\end{equation}
The probability of error in estimating the n-bit phase (after $T$ runs) is bounded above as:
\begin{equation} \begin{split}
    P_e &= \mathbbm{P}\{\cup_{i=1}^n\text{Error in $i^{th}$ bit}\}\\
    &\leq n \mathbbm{P}\{\text{Error in one bit}\}
\end{split} \end{equation}
For erasure noise, error in estimating a bit occurs when the corresponding qubit is erased in all $T$ runs. Therefore,
{
\begin{equation} \begin{split}
    & \mathbbm{P}\{\text{Error in one bit}\} \\
    & = \mathbbm{P}\{\text{$\ket{\phi_i}$ flips to $\ket{e}$ in all $T$ runs,   $\ket{\phi_i} = \ket{1}$}\} \\
    & \quad + \mathbbm{P}\{\text{$\ket{\phi_i}$ flips to $\ket{e}$ in all $T$ runs,   $\ket{\phi_i} = \ket{0}$}\} \\
    &= \mathbbm{P}\{\text{$\ket{\phi_i}$ flips to $\ket{e}$ in all $T$ runs $\mid$  $\ket{\phi_i} = \ket{1}$}\} = p^T.
\end{split} \end{equation}
}
Therefore, the probability of error is bounded as
\begin{equation} 
     P_e \leq n p^T.
\end{equation}
Choosing the number of runs
\begin{equation} \begin{split}
    T \geq \left|\frac{\ln{\frac{\epsilon}{n}}}{\ln p}\right|,
\end{split} \end{equation}
we can achieve $P_e \leq \epsilon$ for any $p \in [0,1)$.
\end{proof}

The well-known threshold theorems \cite{harrow2003robustness,razborov2004upper, kempe2008upper, fawzi2022lower,  UthirakalyaniNC2022} imply that when the noise strength, $p$, is above a threshold, no fault-tolerant scheme with finite redundancy can compute a quantum state within reasonable accuracy. {This is because the lower bound on redundancy is given by $n/Q(\mathcal{N})$, where $n$ is the number of physical qubits and $Q(\mathcal{N})$ is the quantum capacity of channel $\mathcal{N}$ \cite{fawzi2022lower}}. For the shallow noisy quantum computational model for phase estimation discussed above, the best-known threshold for depolarizing \cite{kempe2008upper,fawzi2022lower,UthirakalyaniNC2022} and erasure \cite{fawzi2022lower,UthirakalyaniNC2022} noise are $\frac{1}{3}$ and $\frac{1}{2}$, respectively ({since quantum capacity vanishes above this threshold}). {However, through a noisy quantum phase estimation example, we have shown that when the inputs/outputs are classical with the performance criterion as the probability of error, it is indeed possible to achieve an arbitrarily small probability of error.} Thus, Theorem~\ref{thm:phEst} shows that the known redundancy lower bounds do not hold for quantum computation with classical input/output. This highlights the need for a bound which holds for classical inputs/outputs. A similar argument can be developed for the Deutsch-Jozsa algorithm and other well-known algorithms like discrete logarithm in the presence of noise.

Note that these do not imply the prior bounds on redundancy are incorrect; the apparent contradiction is due to differences in the definition of accuracy. Prior works use a notion of distance (or similarity) between the output quantum states of noiseless and noisy circuits to quantify accuracy. This requirement is too stringent when input and output are classical bits and error probability is a more suitable performance criterion  \cite{vonNeumann1956,Pippenger1988}. As such, we obtain a lower bound on the redundancy under the error probability criterion and then study the effect of resource constraints.

\section{Model}
\label{sec:model}
In this section, we discuss the model of computation, the relevant accuracy criteria, and the noise model.

\subsection{Model of Computation}

Consider the quantum circuit with classical inputs and classical outputs model in Fig.~\ref{fig:ModelCQC}, which is a standard model for gate-based quantum computation. This is denoted by $CQC: \{0,1\}^{n} \rightarrow \{0,1\}^{n}$ or equivalently $CQC(\mathbf{x})$ for $\mathbf{x} \in \{0,1\}^n$, where $n$ is the input size. The goal of the circuit is to realize a function $f:\{0,1\}^{n} \rightarrow \{0,1\}^{n}$.

\begin{figure}[H]
    \includegraphics[scale=0.26]{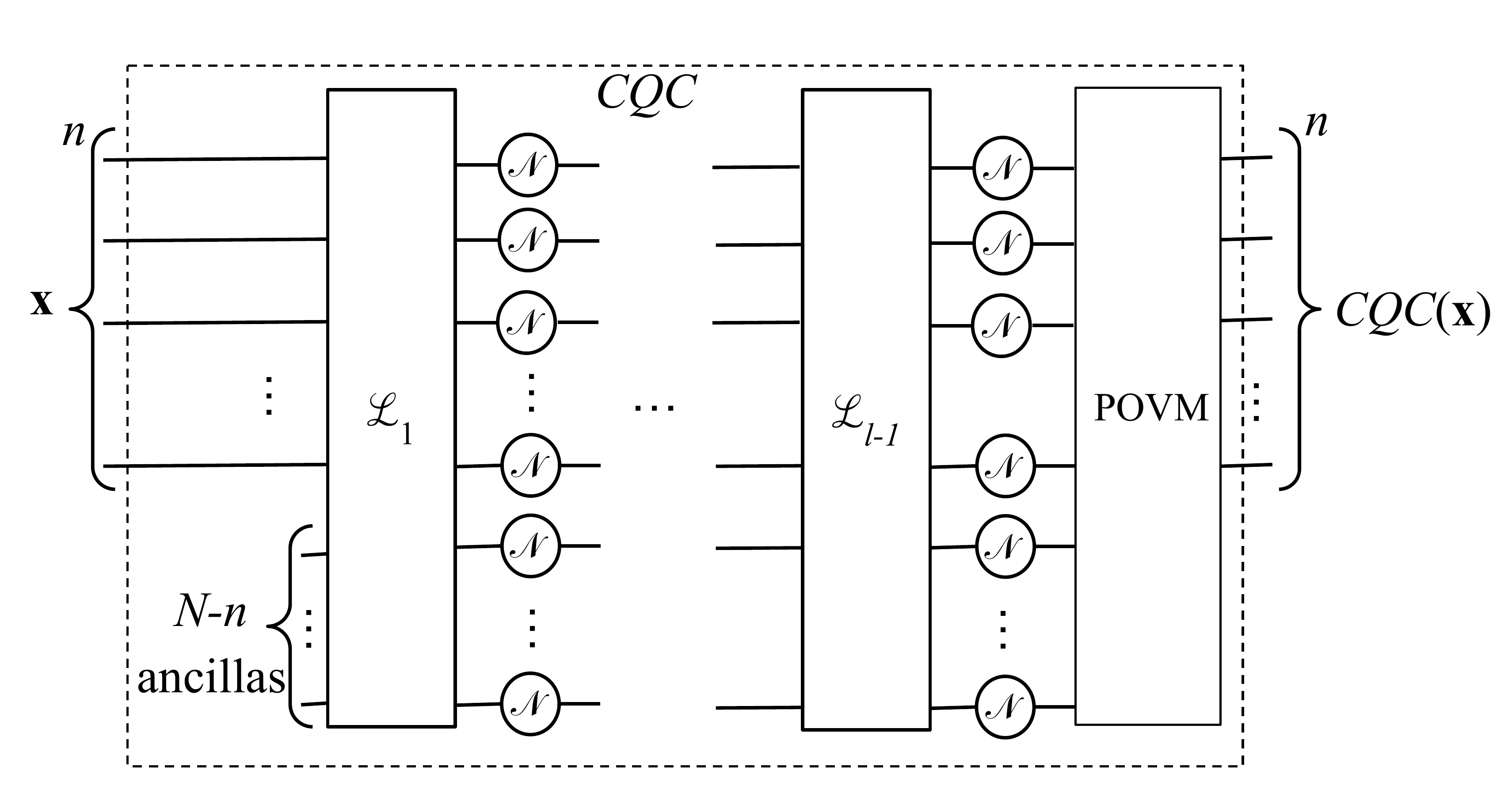}
    \caption{CQC model of computation: classical input, quantum computation, and classical output. }
    \label{fig:ModelCQC}
\end{figure}

The circuit consists of $l$ layers. The first layer takes $n$ classical inputs  ($\mathbf{x}$) as orthogonal quantum states $\ket{0}$ and $\ket{1}$ along with $N-n$ ancillas. It maps the input to a density operator of dimension $2^N$. Any subsequent layer $i$, for $2\le i \le l-1$, takes the output of the previous layer (layer $i-1$) as input. The output of any layer $i$, $1 \le i \le l-1$, is a  density operator of dimension $2^N$. The final layer (layer $l$) performs a POVM measurement and obtains classical output $CQC(\mathbf{x})$.

Each layer $i$, with $i \in \{1,2,\ldots,l-1\}$ is a noisy quantum operation. This is modeled as a noiseless quantum operation $\mathcal{L}_i$ on density operators of dimensions $2^N$ followed by $N$ i.i.d.\ quantum channels $\mathcal{N}$ (Fig.~\ref{fig:ModelCQC}). Finally, the last layer, layer $l$, performs a measurement (POVM), which yields a classical output. Thus the quantum circuit can be represented as a composition of quantum operations as $CQC(\mathbf{x}) = \mathcal{L}_l \circ \mathcal{N}^{\otimes N} \circ \mathcal{L}_{l-1} \circ \cdots \circ \mathcal{L}_2 \circ \mathcal{N}^{\otimes N} \circ \mathcal{L}_1 (\mathbf{x})$, where $\circ$ has the usual meaning of function composition. Next, we present the $\epsilon$-accuracy criteria for computations with classical I/O, taken from the seminal work by von Neumann \cite{vonNeumann1956}.

\begin{definition}[$\epsilon$-accuracy criteria]\label{defn:epsaccuracy}
Suppose $f(\cdot)$ is a classical function realized by a quantum circuit $CQC(\cdot)$ as defined in Sec.~\ref{sec:model}. Then the \emph{ $\epsilon$-accuracy} is:
\begin{equation}
    \mathbb{P}\{CQC(\mathbf{x}) \neq f(\mathbf{x})\} < \epsilon, \mbox{ for all } \mathbf{x} \in \{0,1\}^n.
    \label{cond:accuracy1}
\end{equation}
\end{definition}

In oracle-based algorithms like phase estimation and  Deutsch-Jozsa, there is no explicit input to the first layer, rather the classical parameters of the oracle are implicit classical inputs to the circuit. The results presented in this paper are directly applicable to that scenario as well.

\subsection{Noise Model}
Here, we consider only Holevo-additive channels characterized by a single parameter $p \in [0,1]$ and whose Holevo capacity is monotonically decreasing in $p$. The channels we specifically study are (i) $p$-erasure, (ii) $p$-depolarizing, and (iii) symmetric generalized amplitude damping channels, i.e., GADC$(p,\frac{1}{2})$.

In \cite{fellous2021limitations}, it was shown  that resource constraints can lead to an increase in  noise with increase in redundancy, \emph{scale-dependent noise}. A few models of scale-dependent noise, such as linear, polynomial, and exponential models,  have been studied in \cite{fellous2021limitations}. 
Let $k \triangleq N/n \geq 1$ be the redundancy and $p(k)$ be the noise strength when the redundancy is $k$. In the polynomial model, $p(k)=\min(p_0(1+\alpha (k-1))^{\gamma},1)$ and in the exponential model, $p(k)=\min(p_0\exp(\alpha (k-1)^{\gamma}),1)$ \cite{fellous2021limitations}. Here, $p_0\in[0,1]$ is the noise strength in the absence of any redundancy, i.e., $k=1$, and $\alpha$ and $\gamma$ are positive parameters. 

Intuitively, $p_0$ is the original noise strength of the particular quantum technology. As redundancy increases, more and more physical qubits have to share the same resource, which leads to increased interactions with the environment and among each other. These undesired interactions result in increased noise, which is captured by the models proposed in \cite{fellous2021limitations}.
In the interest of potentially wider applicability, we consider the following generic scale-dependent noise model, which includes the special cases discussed above.

\begin{definition}
\label{def:noise}
Noise $\mathcal{N}_p$ is parameterized by a single parameter $p \in [0,1]$ and the Holevo information $\chi(\mathcal{N}_p)$ is non-increasing in $p$. The parameter $p$ is a function of redundancy $k$, given by $p(k)\triangleq$ $\min(p(k;p_0,\bm{\theta}),1)$, where $\bm{\theta}$ is a tuple of non-negative parameters, and \\
~~~~(i) $p_0=p(1;p_0,\bm{\theta})$ for all $\bm{\theta}$, \\
~~~~(ii) for any $k\ge 1$, $p(k;p_0,\bm{\theta})$ is non-decreasing in any component of $\bm{\theta}$ and in $p_0$, given the other parameters are fixed.
\end{definition}

Here, $p_0$ represents the noise without redundancy, i.e., the initial noise without any resource constraint arising due to redundancy. 
Clearly, the polynomial and exponential models are special cases with $\bm{\theta}=(\alpha,\gamma)$.

The threshold for $p_0$, i.e., the minimum $p_0$ beyond which reliable quantum computation is not possible, was studied in \cite{fellous2021limitations} assuming concatenated codes for error correction. Here, we obtain a universal  threshold for all fault tolerance schemes. 

\section{Non-asymptotic Lower Bound on Redundancy}
\label{sec:bounds}
We obtain the lower bound by forming a mathematical relationship between the problem of $\epsilon$-accurate classical computation on a noisy quantum circuit and the problem of classical communication over a  finite number of i.i.d.\ uses of a quantum channel. Our approach builds on the following two simple observations.

First, a lower bound on redundancy obtained on a computational and noise model that have more capability would also be a lower bound for the original model. Second, a lower bound obtained under more relaxed accuracy criteria also applies to the original accuracy criteria.

Following the first observation, we obtain a redundancy bound for a model where there is no noise in layers $1, 2, \ldots, l-1$, and noise only at the last layer. 
Following the second observation, we use the following relaxation of the $\epsilon$-accuracy criteria.

Let the cardinality of the range of $f$ be $R_f$, and $\mathbf{x}^{(1)}, \mathbf{x}^{(2)}, \ldots, \mathbf{x}^{(R_f)} \in \{0,1\}^n$ be such that \ $|\{f(\textbf{x}^{(i)}): 1 \le i \le R_f\}|=R_f$. Then, we use the following accuracy criteria: for all $i=1, 2, \ldots, R_f$
\begin{equation}
    \mathbb{P}\{ CQC(\mathbf{x}^{(i)}) \neq f(\mathbf{x}^{(i)})\} < \epsilon.\label{cond:accuracy2_eqn}
\end{equation} 

\begin{figure}[H]
    \centering
    \includegraphics[scale=0.25]{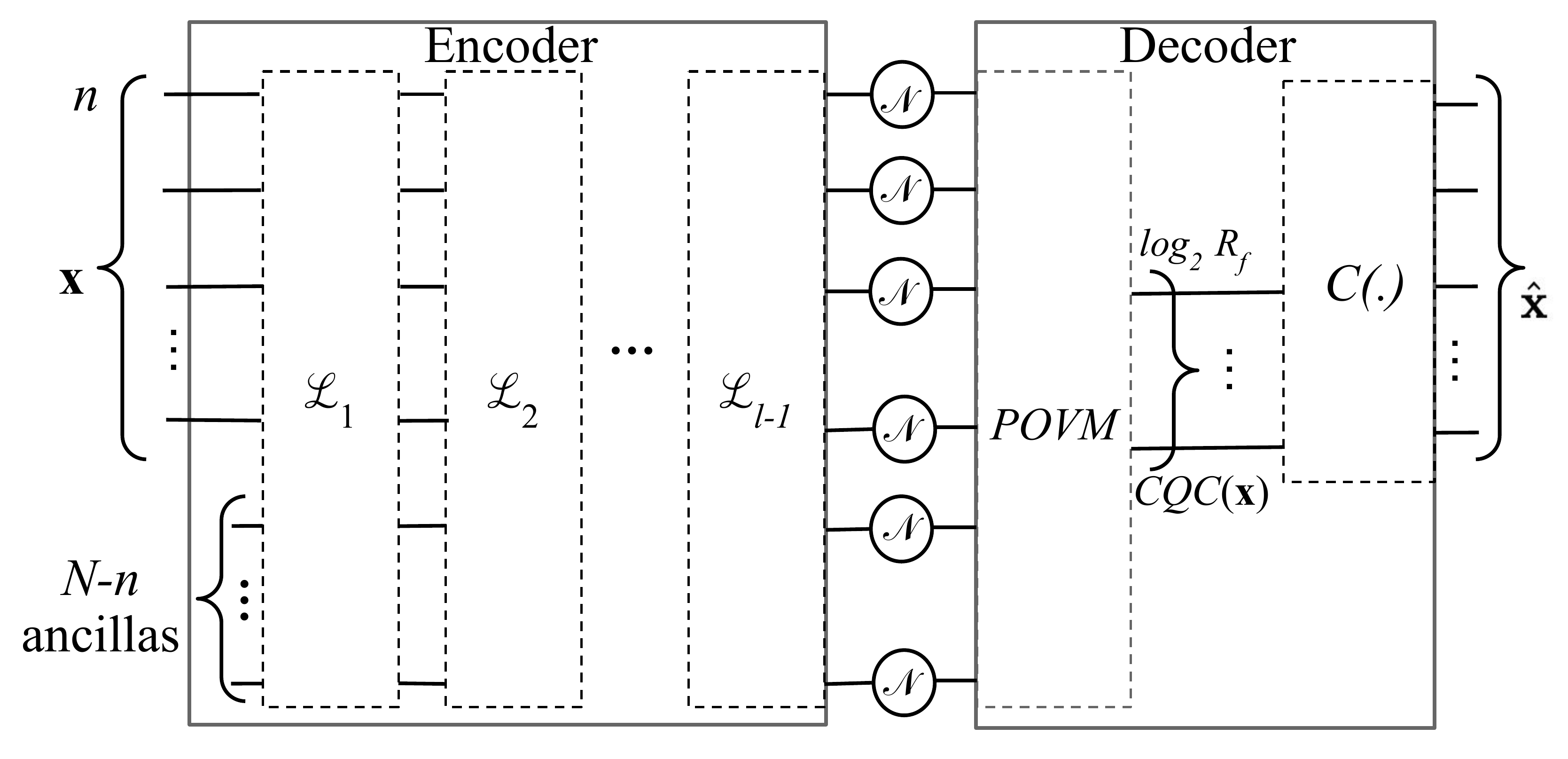}
    \caption{Reduction of noisy computation model in Fig. \ref{fig:ModelCQC} to a noisy communication problem. }
    \label{fig:ModelCQC_compute2comm}
\end{figure}

Clearly, this accuracy criteria is strictly weaker than the $\epsilon$-accuracy criteria when $f$ is not bijective. 
Further, suppose we have access to a hypothetically perfect (quantum or classical) circuit $C(\cdot)$ that can invert $f$ on the set $\{f(\textbf{x}^{(i)}): 1 \le i \le R_f\}$, where $f$ is bijective. Then, the relaxed accuracy criteria in \eqref{cond:accuracy2_eqn} becomes equivalent to
\begin{equation}
    \mathbb{P}\{ C(CQC(\mathbf{x}^{(i)})) \neq \mathbf{x}^{(i)}\} < \epsilon, \mbox{ for all } i = 1,2,\ldots,R_f\label{cond:accuracy3_eqn}
\end{equation} 
for the circuit shown in Figure~\ref{fig:ModelCQC_compute2comm}. This relaxed criterion can then be seen as the criterion for classical communication of $R_f$ messages with the maximum probability of error $\epsilon$ over a finite number of independent use of a channel.

Using results from one-shot classical communication over a quantum channel {\cite[\S 7.1.2]{khatri2020principles}} and using Holevo additiviy, we obtain the following lower bound on redundancy.

\begin{theorem}
\label{thm:finiteLB}
For $\epsilon$-accurate computation of any $n$-bit Boolean function $f$ using a quantum circuit in the presence of noise in Def.~\ref{def:noise}, the required redundancy is
\begin{equation}
    k > \frac{c(\epsilon, R_f, n)}{\chi(\mathcal{N}_{p(k)})}, \label{eqn:redundancy_bound_k}
\end{equation}
where,

$c(\epsilon, R_f, n) \triangleq (1-\epsilon) \frac{\log_2 R_f}{n} + \frac{\epsilon \log_2 \epsilon + (1-\epsilon) \log_2(1-\epsilon)}{n}$,

and $p(k)=\min(p(k;p_0,\bm{\theta}),1)$.
\end{theorem}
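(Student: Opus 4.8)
The plan is to make rigorous the two reductions already outlined and then invoke a one-shot converse for classical communication over $\mathcal{N}^{\otimes N}$. First I would apply the capability-enhancement observation to replace the noisy circuit $\mathcal{L}_l \circ \mathcal{N}^{\otimes N} \circ \mathcal{L}_{l-1} \circ \cdots \circ \mathcal{N}^{\otimes N} \circ \mathcal{L}_1$ by the more powerful circuit $\mathcal{L}_l \circ \mathcal{N}^{\otimes N} \circ \mathcal{L}_{l-1} \circ \cdots \circ \mathcal{L}_1$, in which every internal layer is noiseless and noise acts only once, immediately before the final POVM; any redundancy lower bound for this idealized model is therefore also a lower bound for the original. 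Then I would weaken the requirement of Definition~\ref{defn:epsaccuracy} to the relaxed criterion \eqref{cond:accuracy2_eqn} on the $R_f$ inputs $\mathbf{x}^{(1)},\ldots,\mathbf{x}^{(R_f)}$ whose images under $f$ are distinct, and append the hypothetical perfect inverter $C$ so that the task becomes \eqref{cond:accuracy3_eqn}.

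Next I would recognize \eqref{cond:accuracy3_eqn} as a classical-communication code over the channel $\mathcal{N}^{\otimes N}$. Encode message $i \in \{1,\ldots,R_f\}$ as the classical input $\mathbf{x}^{(i)}$; the noiseless front end $\mathcal{E} \triangleq \mathcal{L}_{l-1} \circ \cdots \circ \mathcal{L}_1$ prepares a (possibly entangled) $N$-qubit state $\sigma_i = \mathcal{E}(\mathbf{x}^{(i)})$; the single noise layer produces $\mathcal{N}^{\otimes N}(\sigma_i)$; and the POVM of $\mathcal{L}_l$ together with the inverter $C$ constitutes the decoder. Condition \eqref{cond:accuracy3_eqn} says precisely that this is a code of $R_f$ messages with decoding error below $\epsilon$ over a single use of $\mathcal{N}^{\otimes N}$, i.e., over $N = kn$ i.i.d.\ uses of $\mathcal{N}_{p(k)}$.

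I would then apply the one-shot converse of \cite[\S 7.1.2]{khatri2020principles}, which combines the Holevo bound on accessible information with Fano's inequality. Taking the messages uniform and using that average error is at most maximal error, Fano yields $(1-\epsilon)\log_2 R_f - h_2(\epsilon) \le I(X;Y)$ for the induced classical channel, where $h_2$ denotes binary entropy; the Holevo bound gives $I(X;Y) \le \chi(\{1/R_f,\, \mathcal{N}^{\otimes N}(\sigma_i)\}) \le \chi(\mathcal{N}^{\otimes N})$, the Holevo information of the $N$-fold channel maximized over all input ensembles. Combining these,
\begin{equation}
(1-\epsilon)\log_2 R_f + \epsilon\log_2\epsilon + (1-\epsilon)\log_2(1-\epsilon) \le \chi(\mathcal{N}^{\otimes N}).
\end{equation}
The decisive step is Holevo additivity, which holds for the channel class of Def.~\ref{def:noise} and collapses $\chi(\mathcal{N}^{\otimes N})$ to $N\,\chi(\mathcal{N}_{p(k)}) = kn\,\chi(\mathcal{N}_{p(k)})$. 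Dividing by $n\,\chi(\mathcal{N}_{p(k)})$ and recognizing that $\epsilon\log_2\epsilon + (1-\epsilon)\log_2(1-\epsilon) = -h_2(\epsilon)$ gives $k \ge c(\epsilon, R_f, n)/\chi(\mathcal{N}_{p(k)})$, the claimed bound.

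The main obstacle is the additivity step: because $\sigma_i$ may be entangled across the $N$ qubits, the converse naturally produces $\chi(\mathcal{N}^{\otimes N})$ rather than $N\,\chi(\mathcal{N}_{p(k)})$, and since Holevo information can in general be superadditive (Hastings), this single-letter reduction is not automatic. It is exactly the Holevo-additivity hypothesis built into Def.~\ref{def:noise} that licenses the collapse for the erasure, depolarizing, and symmetric GADC channels considered here. Secondary care would go to the passage from strict to non-strict inequalities and to verifying that restricting attention to the $R_f$ distinguished inputs does not weaken the bound below what the relaxed criterion \eqref{cond:accuracy2_eqn} guarantees.
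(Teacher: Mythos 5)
Your proposal is correct and follows essentially the same route as the paper: the same two reductions (noise pushed to a single layer before measurement, accuracy relaxed to the $R_f$ distinguished inputs with the perfect inverter $C$), the same one-shot converse from \cite[\S 7.1.2]{khatri2020principles} (which you additionally unpack into its Fano-plus-Holevo ingredients), and the same Holevo-additivity collapse $\chi(\mathcal{N}^{\otimes N}) = N\chi(\mathcal{N}_{p(k)})$ followed by division by $n$. The only cosmetic difference is that the paper keeps the inequality strict throughout by tracking $P_e < \epsilon$ and the monotonicity of $h_2$ on $[0,\tfrac{1}{2}]$, a point you correctly flag as needing care.
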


\begin{proof}
    For any additive quantum channel $\mathcal{N}$, an upper bound for classical communication over a quantum channel using an $(\mathcal{M}, N, \epsilon)$ code is  {\cite[\S 7.1.2]{khatri2020principles}}:
\begin{equation}
    \log_2(|\mathcal{M}|) \leq \frac{\chi(\mathcal{N}^{\otimes N}) + h_2(\epsilon)}{1-\epsilon},
\end{equation}
where {$\mathcal{M}$ is the message alphabet} and $h_2(.)$ is the binary entropy function. Assigning $|\mathcal{M}| = R_f$ yields
\begin{equation}
\begin{split}
    \epsilon > P_e &\geq 1-\frac{\chi(\mathcal{N}^{\otimes N})+h_2(P_e)}{\log_2 R_f}, \label{eqn:Pe_LB} \\
    & \geq 1-\frac{\chi(\mathcal{N}^{\otimes N})+h_2(\epsilon)}{\log_2 R_f}.    
\end{split}
\end{equation}
The last inequality holds, since $h_2(\cdot)$ is increasing in $[0,\frac{1}{2}]$.

Rearranging, we obtain
\begin{equation}
    \chi(\mathcal{N}^{\otimes N}) > (1-\epsilon) \log_2 R_f - h_2(\epsilon).
\end{equation}
Noting that Holevo information is additive, 
\begin{equation}
    N \chi(\mathcal{N}) >  (1-\epsilon) \log_2 R_f - h_2(\epsilon),
\end{equation}
\begin{equation}
    N > \frac{(1-\epsilon) \log_2 R_f - h_2(\epsilon)}{\chi(\mathcal{N})}. \label{eqn:redundancy_bound_N}
\end{equation}
Dividing both sides of the inequality by $n$, we obtain
\begin{equation}
    k\triangleq\frac{N}{n} > \frac{(1-\epsilon) \frac{\log_2 R_f}{n} - \frac{h_2(\epsilon)}{n}}{\chi(\mathcal{N}_{p(k)})}.
\end{equation}
\end{proof}

Unlike existing lower bounds on redundancy, this is a non-asymptotic bound and hence, is applicable to quantum computers of any size (including NISQ regime) and to any fault tolerance scheme. {A potential limitation is that our bound may be loose since we have reduced our noisy model to consist of only one layer of noise in our derivation.}

Note that the above bound is applicable for any Holevo-additive noise with a single parameter $p$, even when $p$ is a function of $k$. Thus, this bound can be used for understanding the limits of scale-dependent noise, as in the next section.

\section{Scale-dependent Noise: New Thresholds}
\label{sec:scale}

The terms in \eqref{eqn:redundancy_bound_k} can be rearranged to obtain $\epsilon$, a lower bound on the probability of error, over all fault-tolerant schemes, for different redundancy $k$. In Figure~\ref{fig:PeLB_vs_k}, the thin dashed lines plot the same for erasure noise with $p(k)=p_0$. This shows that there may exist a fault-tolerant scheme that can take the probability of error arbitrarily close to zero by increasing $k$.

\begin{figure}%[H]
    \centering
    \includegraphics[scale=0.49]{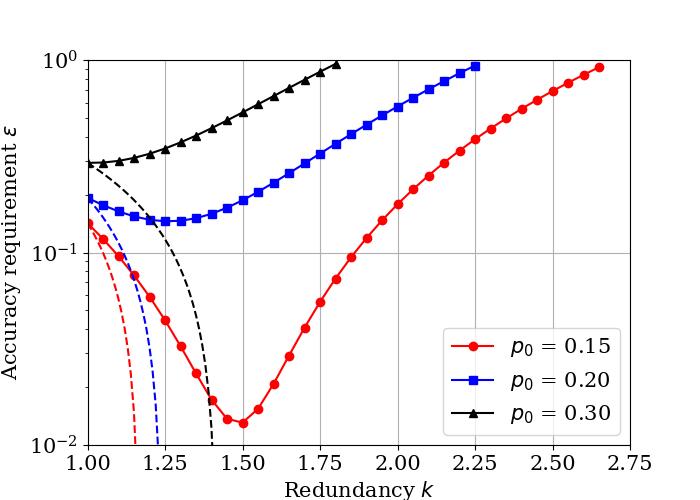}
    \caption{Solid and dashed lines are for erasure 
 noise with $p(k)=p_0$ and $p(k)=\min(p_0 (1+\alpha (k-1))^{\gamma}, 1)$, respectively.}
    \label{fig:PeLB_vs_k}
\end{figure}

On the contrary, the curves are significantly different for scale-dependent noise, i.e., when $p$ changes with $k$.
In Figure~\ref{fig:PeLB_vs_k}, we plot this for erasure $p(k) = \min(p_0 (1+\alpha (k-1))^{\gamma}, 1)$. First, observe that there is a certain value of $p_0$ ($p_0=0.3$ in this case) beyond which an increase in redundancy actually hurts performance. Second, observe that for each $p_0$ there is a minimum $\epsilon$ below which no fault-tolerant scheme can reach. 

These two observations lead to two interesting thresholds for initial noise $p_0$. The first threshold is the value of $p_0$ beyond which redundancy only hurts accuracy. 
The second threshold of interest is $p_0$ beyond which the $\epsilon$-accuracy criteria cannot be achieved using any fault tolerance scheme for a given $\epsilon>0$. Clearly, both these thresholds depend on the model of the scale-dependent noise and the parameters, e.g., $\alpha$, $\gamma$, therein. However, while the second threshold depends on $\epsilon$, the first threshold does not. 

The first threshold is similar to the threshold in \cite{fellous2021limitations}, but notably, their threshold calculation is limited to concatenated codes, whereas the threshold obtained here applies to any fault-tolerance scheme. The threshold is obtained using a derivative-based approach similar to \cite{fellous2021limitations}. For brevity, we omit details to focus on the second threshold, which has not been studied before.Please note that since both the noise thresholds are obtained using a genie-aided redundancy lower bound in Theorem~\ref{thm:finiteLB}, they are only (upper) bounds of the actual noise threshold, and the bounds could potentially be loose.

For a given $\bm{\theta}$ and $p(k;p_0,\bm{\theta})$, the second threshold $p_{th}(\bm{\theta}, \epsilon)$ is the minimum $p_0$ beyond which $\epsilon$-accurate computation is not possible. Thus, using Theorem~\ref{thm:finiteLB}, in particular, using \eqref{eqn:redundancy_bound_k}, $p_{th}(\bm{\theta}, \epsilon)$ is the minimum $p_0$ beyond which the criteria in Definition~\ref{defn:epsaccuracy} is not satisfied by any 
$k\ge 1$. Thus, $p_{th}(\bm{\theta}, \epsilon)$ is obtained by solving the following optimization problem.
\begin{equation}
\label{eqn:bilevel}
\begin{split}
\mbox{minimize }&  p_0 \\
\mbox{s.t. }& \min_{k\ge 1,~0\le p(k)\le 1}~ g_{\bm{\theta}}(k, p_0) ~\geq 0, 
\end{split}
\end{equation}
where $g_{\bm{\theta}}(k, p_0)=\frac{c(\epsilon, R_f, n)}{k} - \chi(\mathcal{N}_{p(k)})$.

Consider the following optimization problem
\begin{equation}
P_L: \min_{k\ge 1,~0\le p(k)\le 1}~ g_{\bm{\theta}}(k, p_0).  \nonumber
\end{equation}

Clearly, \eqref{eqn:bilevel} has the optimization problem $P_L$, which we refer to as the lower-level optimization problem, as a constraint. Thus, \eqref{eqn:bilevel} is a bi-level optimization problem. For a given set of $\bm{\theta}$ the solution to $P_L$ is a function of $p_0$, which we denote as $g^*_{\bm{\theta}}(p_0)$. Thus, the bi-level optimization problem in \eqref{eqn:bilevel} can also be written as
\begin{equation}
\begin{split}
& \mbox{min }  p_0   \\
& \mbox{s.t. } g^*_{\bm{\theta}}(p_0)\geq 0.   \label{eqn:bilevelAlt}
\end{split}
\end{equation}

In general, to compute the threshold $p_{th}(\bm{\theta}, \epsilon)$ one needs to solve  \eqref{eqn:bilevel}. However, for erasure noise and some special classes of $p(k;p_0,\bm{\theta})$, $g_{\bm{\theta}}(k,p_0)$ is convex in $k$ for all $p_0$. Hence,  the minimum of  the constraint in \eqref{eqn:bilevel} for a given $p_0$ can be derived in closed form. Then, solving for the smallest $p_0$ for which that minimum is non-negative gives the following theorem.

\begin{theorem}
\label{col:closed_erasure}
For erasure noise, thresholds for a fixed accuracy requirement $\epsilon>0$ are as follows:
\begin{enumerate}
   \item If $p(k;p_0,\alpha) = p_0 (1+\alpha (k-1))$, then 
    \begin{align*}p_{th}(\bm{\theta}, \epsilon) = 
        \begin{cases}
            1- c , & \mbox{ if } \alpha \geq \frac{c}{1-c}, \mbox{ and} \\
            \frac{\left( \sqrt{c\alpha} - \sqrt{c\alpha - \alpha +1} \right)^2}{(\alpha -1)^2}, & \mbox{ otherwise.}
        \end{cases}
    \end{align*}
    \item If $p(k;p_0,\gamma) = p_0 k^\gamma$, then 
    \begin{align*}p_{th}(\bm{\theta}, \epsilon) = 
        \begin{cases}
            1- c , & \mbox{ if }  \gamma \geq \frac{c}{1-c}, \mbox{ and} \\
            \frac{\left(\frac{\gamma}{c} \right)^{\gamma}}{\left(\gamma + 1 \right)^{\gamma + 1}}, & \mbox{otherwise}.
        \end{cases}
    \end{align*}
\end{enumerate}
Here $c=c(\epsilon,R_f,n)$ as defined in Theorem~\ref{thm:finiteLB}.

\begin{proof}
Consider the following procedure to find a \textit{closed-form} expression for $p_{th}(\bm{\theta}, \epsilon)$.
 \begin{enumerate}
    \item Minimize $g_{\bm{\theta}}(k, p_0)$ over $k$. Since $p(k;p_0, \bm{\theta})$ is non-decreasing in $k$, it is enough to minimize $g_{\bm{\theta}}(k, p_0)$ over $[1,k_{\max}]$, where $k_{\max} = \max\{k \mid p(k;p_0, \bm{\theta}) \leq 1\}$ . The minimum occurs at either $k=1$, $k=k_{\max}$ or a stationary point of $g_{\bm{\theta}}(k, p_0)$ in  $(1,k_{\max})$.
    \item Substitute the minimizer $k$ into $g_{\bm{\theta}}(k, p_0) \geq 0$, which yields an equation in $p_0, \bm{\theta}$.
    \item Solving the equation for $p_0$ yields a closed-form expression for $p_{th}(\bm{\theta}, \epsilon)$.
\end{enumerate}
{The derivation of $p_{th}(\bm{\theta}, \epsilon)$ for corresponding $p(k; p_0, \bm{\theta})$ is given in Appendix~\ref{sec:erasure_closed_form}.}
\end{proof}
\end{theorem}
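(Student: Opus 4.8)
The plan is to instantiate the bi-level program~\eqref{eqn:bilevel} for erasure noise, for which the Holevo information is $\chi(\mathcal{N}_p)=1-p$, so that the lower-level objective collapses to the scalar function $g_{\bm{\theta}}(k,p_0)=c/k-1+p(k)$ with $c=c(\epsilon,R_f,n)$ (assuming the non-trivial range $0<c<1$). Since $g_{\bm{\theta}}$ increases with $p_0$ pointwise in $k$, the map $p_0\mapsto\min_{k}g_{\bm{\theta}}$ is monotone, and therefore the threshold $p_{th}$ is exactly the smallest $p_0$ at which $\min_{k\ge 1}g_{\bm{\theta}}(k,p_0)=0$. The task then reduces to the three-step recipe already stated: minimize over $k$ in closed form, substitute the minimizer, and solve the resulting scalar equation for $p_0$.

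First I would carry out the inner minimization over the feasible interval $[1,k_{\max}]$, where $k_{\max}=\max\{k:p(k)\le 1\}$. In both models $g_{\bm{\theta}}$ is a sum of the strictly convex term $c/k$ and $p(k)$, which is affine (linear model) or a positive power (power model), so $g_{\bm{\theta}}$ has a single interior stationary point solving $g_{\bm{\theta}}'(k)=-c/k^2+p'(k)=0$: namely $k^\star=\sqrt{c/(p_0\alpha)}$ for $p(k)=p_0(1+\alpha(k-1))$ and $k^\star=(c/(p_0\gamma))^{1/(\gamma+1)}$ for $p(k)=p_0k^\gamma$. Following step~1 of the recipe, the constrained minimizer is $k^\star$ when $k^\star\ge 1$ and is the left endpoint $k=1$ when $k^\star<1$; the right endpoint never binds, since $p(k_{\max})=1$ forces $g_{\bm{\theta}}(k_{\max})=c/k_{\max}>0$, which cannot equal the threshold value $0$.

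Next I would substitute and solve. In the boundary regime the minimizer is $k=1$, where property~(i) of Def.~\ref{def:noise} gives $p(1)=p_0$, so $g_{\bm{\theta}}(1,p_0)=c-1+p_0$ and $g_{\bm{\theta}}(1,p_0)=0$ yields $p_{th}=1-c$. In the interior regime I would plug $k^\star$ back in. For the linear model the identity $c/k^\star=p_0\alpha k^\star=\sqrt{c p_0\alpha}$ turns $g_{\bm{\theta}}(k^\star,p_0)=0$ into the quadratic $(1-\alpha)p_0+2\sqrt{c\alpha}\,\sqrt{p_0}-1=0$ in $\sqrt{p_0}$, whose relevant positive root squares to $\bigl(\sqrt{c\alpha}-\sqrt{c\alpha-\alpha+1}\bigr)^2/(\alpha-1)^2$. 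For the power model I would exploit the stationarity relation $p_0(k^\star)^\gamma=c/(\gamma k^\star)$ to collapse $g_{\bm{\theta}}(k^\star,p_0)=0$ to $k^\star=c(\gamma+1)/\gamma$; equating this with $(c/(p_0\gamma))^{1/(\gamma+1)}$ and solving for $p_0$ gives $(\gamma/c)^\gamma/(\gamma+1)^{\gamma+1}$.

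Finally I would settle the case split and verify consistency, which is the step I expect to demand the most care. The active branch is decided by whether $k^\star\le 1$ at the boundary candidate $p_0=1-c$: for the linear model $k^\star\le 1\iff c/(p_0\alpha)\le 1$, which at $p_0=1-c$ reads $\alpha\ge c/(1-c)$, and analogously $\gamma\ge c/(1-c)$ for the power model, exactly the stated conditions. To make this rigorous I would confirm that when $\alpha<c/(1-c)$ the zero of $\min_k g_{\bm{\theta}}$ lies strictly inside the interior regime---using that $\min_k g_{\bm{\theta}}$ is negative at $p_0=1-c$ and positive at $p_0=c/\alpha$ to trap it in $(1-c,\,c/\alpha)$---so that the interior formula, not $1-c$, is the true threshold; and I would check that the two branches glue continuously at $\alpha=c/(1-c)$, where the interior expression collapses to $1-c$. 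Once the correct minimizer is pinned down, solving the quadratic and the single power equation is routine algebra.
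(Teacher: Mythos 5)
Your proposal is correct and follows essentially the same route as the paper's Appendix~\ref{sec:erasure_closed_form} derivation: reduce to the convex one-dimensional minimization of $g_{\bm{\theta}}(k,p_0)=c/k+p(k)-1$ over $[1,k_{\max}]$, compare the interior stationary point $k^\star=\sqrt{c/(p_0\alpha)}$ (resp.\ $(c/(p_0\gamma))^{1/(\gamma+1)}$) against the endpoint $k=1$, and solve $g_{\bm{\theta}}=0$ for $p_0$ in each regime, arriving at the identical case split $\alpha$ (or $\gamma$) versus $c/(1-c)$ and the same closed forms. The only differences are refinements of detail --- your explicit argument that the $k_{\max}$ endpoint never binds (since $g_{\bm{\theta}}(k_{\max},p_0)=c/k_{\max}>0$) and the intermediate-value trapping of the zero in $(1-c,\,c/\alpha)$ make rigorous two points the paper leaves implicit.
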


Depending on the scale-dependent noise model, the optimization problem in the constraint of \eqref{eqn:bilevel} may or may not be convex.  In both cases, we develop algorithms that can solve the problem for all noise models in Definition~\ref{def:noise}.

For a general $p(k; p_0, \bm{\theta})$, a closed-form expression for $p_{th}(\bm{\theta}, \epsilon)$ in terms of $\bm{\theta}$ cannot be obtained, and therefore, $p_{th}(\bm{\theta}, \epsilon)$ must be computed numerically.

We develop Algorithm~\ref{algo:bisec_gd} to obtain $p_{th}(\bm{\theta}, \epsilon)$ by solving  bi-level optimization problem \eqref{eqn:bilevel}. In Algorithm~\ref{algo:bisec_gd}, we solve the alternate formulation \eqref{eqn:bilevelAlt} using the bisection method, while assuming  access to an oracle that computes $g^*_{\bm{\theta}}(p_0)$ for any $p_0$. Later, we also develop efficient algorithms that solve $P_L$ and obtain $g^*_{\bm{\theta}}(p_0)$ for any $p_0$. The proof of convergence of Algorithm~\ref{algo:bisec_gd} is given in Appendix~\ref{sec:proof_of_convg_bisec}.

Algorithm~\ref{algo:bisec_gd} computes the threshold $p_{th}(\bm{\theta}, \epsilon)$ (up to an error of $\delta_{p_0}$), for a pre-determined set of $\bm{\theta}$ (denoted by $\Theta$ of finite cardinality). Lines \ref{algo:bisec_start}--\ref{algo:bisec_end} describe the bisection method to compute $p_{th}(\bm{\theta}, \epsilon)$. Depending on whether $P_L$ is convex or non-convex, Algorithm~\ref{algo:proj_gd} or Algorithm~\ref{algo:line_search} is used to compute $g^*_{\bm{\theta}}(p_0)$, respectively.

\begin{algorithm}
    \caption{Algorithm to obtain $p_{th}(\bm{\theta}, \epsilon)$ or $\Bar{\Theta}_s$ numerically.}
    \label{algo:bisec_gd}
    \SetAlgoLined  
    Initialize the set $\Theta$, \textnormal{max\_iters}, $\delta_{p_0}$, $\delta$, $\Bar{\Theta}_s \leftarrow \{\}$, $k \leftarrow 1$\;    
    \BlankLine
    \For{\textnormal{each} $\bm{\theta} \in  \Theta$}{
        Initialize $i\leftarrow0$, $\Delta p_0 \leftarrow 1$, $p_0\leftarrow0.5$, $p^{-}_{0} \leftarrow 0$, $p^{+}_{0}\leftarrow1$\;
        \tcp{Bisection Method}
        \While{$\Delta p_0 > \delta_{p_0}$ \textnormal{and} $i<$\textnormal{max\_iters}}{ \label{algo:bisec_start}
            $p_{-1} \leftarrow p_0$\;
            $p_0 \leftarrow (p^{-}_{0} + p^{+}_{0})/2$\;
            $k_{\max} \leftarrow 1 + \alpha^{-1} ( (p_0)^{-\frac{1}{\gamma}} - 1 )$\;
            Solve $P_L$: $g^*_{\bm{\theta}}(p_0) = \min_{k \ge 1} g_{\bm{\theta}}(k,p_0)$\; \label{algo:PL}
            \If{$g^*_{\bm{\theta}}(p_0)>0$}{
                $p^{+}_{0} \leftarrow p_0$\;
            }
            \Else{
                $p^{-}_{0} \leftarrow p_0$\;
            }
            $\Delta p_0 \leftarrow |p_0 - p_{-1}|$\;
            $i \leftarrow i+1$\;
        } \label{algo:bisec_end}
        $p_{th}(\bm{\theta}, \epsilon) \leftarrow p_0$. \\
        $\Bar{\Theta}_s \leftarrow \Bar{\Theta}_s \bigcup \{(p_0, \alpha, \gamma)\}$\;
    }
\end{algorithm}

Obtaining $g^*_{\bm{\theta}}(\cdot)$ requires solving $P_L$. Next, we present efficient algorithms for solving $P_L$ for erasure, depolarizing, and symmetric GAD channels, and numerically obtain the converse surface for those noise models.

\subsection{Threshold for Erasure}
\label{sec:erasureConverseRegion}

In this section, we derive necessary conditions for $\epsilon$-accurate computation when the source of corruption of quantum states is erasure. Substituting for the classical capacity of QEC from \eqref{eqn:capacity_erasure} in \eqref{eqn:redundancy_bound_k} yields
\begin{equation}
    g_{\bm{\theta}}(k, p_0) = \frac{c(\epsilon, \eta, n)}{k} + p(k) - 1 < 0.
    \label{eqn:erasure_g}
\end{equation}

\begin{remark}
\label{remark:compact_k}
One can equivalently solve $P_L$ by restricting the range of $k$ to $[1,k_{\max}]$, where $k_{\max} = \max\{k \mid p(k;p_0, \bm{\theta}) \leq 1\}$. Also, $k_{\max}$ is finite and hence $[1,k_{\max}]$ is compact, which makes it convenient to solve \eqref{eqn:bilevel}. Therefore, one can replace line \ref{algo:PL} with $g^*_{\bm{\theta}}(p_0) = \min_{k \in [1,k_{\max}]} g_{\bm{\theta}}(k,p_0)$ to obtain the same value of threshold $p_{th}(\bm{\theta}, \epsilon)$. Please refer to Appendix~\ref{apndx:k_max} for more details.
\end{remark}

\subsubsection{Physical Noise $p(k; p_0, \bm{\theta})$ Convex in Redundancy $k$}

For the erasure channel, if $p(k; p_0, \bm{\theta})$ is convex, then $g_{\bm{\theta}}(k, p_0)$ is convex in $[1,k_{\max}]$, since the Holevo information $\chi(\mathcal{N}_{p})$ is affine in $p$. Therefore, the problem $P_L$ in \eqref{eqn:bilevel}, is convex, and from Remark~\ref{remark:compact_k}, the feasible set is compact.

A convex function over a compact set can be optimized using a gradient projection method given in \cite{wright1999numerical}. There are many algorithms to solve general gradient projection problems such as sequential quadratic programming (SQP) and  augmented Lagrangian methods that can be directly applied to solve $P_L$. Since, our problem is a one-dimensional convex problem (with only a Lipschitz gradient constraint) over a finite range $[1,k_{\max}]$, we provide a simple constant step-size gradient projection algorithm (Algorithm \ref{algo:proj_gd}). The algorithm starts from $k_{in} = 1$ when $g_{\bm{\theta}}(\cdot, p_0)$ is convex (w.l.o.g.).

\begin{algorithm}
\caption{Projected gradient descent routine.}
\label{algo:proj_gd}
\SetAlgoLined
\SetKwInOut{Input}{Input}
\SetKwInOut{Output}{Output}
\SetKwFunction{ProjGD}{ProjGD}

\SetKwProg{Fn}{Function}{:}{}

\Fn{\ProjGD{$g_{\bm{\theta}}$, $p_0$, $k_{in}$, $k_{\max}$, $L$, $\zeta$}}{
Initialize $j \leftarrow 0$, $k_j \leftarrow k_{in}$, $\xi = \frac{1}{L}$, $\Delta g = 2 \zeta$\;
\While{$\Delta g \geq \zeta$} {
        $d_j \leftarrow |g'_{\bm{\theta}}(k_j, p_0)|$\;
    
    $k_{j+1} \leftarrow \min \{ k_{\max}, \max\{1,k_j + \xi  d_j\}\}$\;
    $\Tilde{g} \leftarrow g_{\bm{\theta}}(k_{j+1},p_0)$\;
    $\Delta g \leftarrow |g_{\bm{\theta}}(k_{j},p_0) - \Tilde{g}|$\;
    $j \leftarrow j+1$\;
}
\Return{$\Tilde{g}, k_j$}\;
}
\end{algorithm}

Algorithm \ref{algo:proj_gd} solves $P_L$ optimally if step size ($\xi$) and stopping criterion ($\zeta$) are chosen appropriately. Sufficient conditions for convergence are: 1) $\xi \in (0, \frac{1}{L}]$, if $g'_{\bm{\theta}}(k,p_0) \triangleq \frac{\partial}{\partial k} g_{\bm{\theta}}(k,p_0)$ is $L$-Lipschitz over $[1, k_{\max}]$, and 2) stopping criterion provided in Definition~\ref{def:gd_stop_criterion1}. In all our computations, we choose $\xi = \frac{1}{L}$ as the step size for fast convergence.

\begin{definition}
\label{def:gd_stop_criterion1}
\textit{Stopping criterion 1:} Let $\{k_j\}$ be the iterates generated by the projected gradient descent algorithm (Algorithm~\ref{algo:proj_gd}), 
we use the following \textit{stopping criterion} for projected gradient descent algorithm:
\begin{equation}
    \label{eqn:stopping_criterion1}
    |g_{\bm{\theta}}(k_{j}, p_0) - g_{\bm{\theta}}(k_{j+1}, p_0)| < \frac{\delta^2}{2L k^2_{\max}} =: \zeta.
\end{equation}
\end{definition}

Then, it follows from the convexity and $L$-Lipschitz property of $g_{\bm{\theta}}(\cdot, p_0)$ that stopping criterion \eqref{eqn:stopping_criterion1} is a sufficient condition for convergence, which is $ g_{\bm{\theta}}(k_{j+1}, p_0) - g^*_{\bm{\theta}}(p_0)  \leq \delta$. The proof of convergence of Algorithm~\ref{algo:proj_gd} and derivation of Lipschitz constant for erasure noise are given in Appendix~\ref{sec:proof_of_convg_proj_gd}.

\subsubsection{Physical Noise $p(k;p_0, \bm{\theta})$ Non-convex in Redundancy $k$}

Suppose $p(k)$ is non-convex, then $g_{\bm{\theta}}(k;\bm{\theta}, \epsilon)$ is also non-convex. Hence, the lower-level problem $P_L$ cannot be solved using Algorithm~\ref{algo:proj_gd} (\texttt{ProjGD}). Therefore, we provide a line-search algorithm (Algorithm~\ref{algo:line_search}) to compute solution for a non-convex problem $P_L$. 

In Algorithm~\ref{algo:line_search}, the compact set $[1,k_{\max}]$ is traversed by successive gradient descent (or ascent) and perturbation over a one-dimensional non-convex function using an iterate starting from $k=1$ (w.l.o.g.) and moving in the positive $k$ direction. Lines \ref{algo:line_search_loop_begin}--\ref{algo:line_search_loop_end} include one iteration of Algorithm~\ref{algo:line_search}, which contains calls to \texttt{ProjGD} and \texttt{Perturb} as subroutines. The variable $\tilde{g}$ keeps track of the minimum value of $g_{\bm{\theta}}(\cdot, p_0)$ encountered thus far with an error of $\delta>0$.

\begin{algorithm}   
    \caption{Line search algorithm to find $\min_{k\geq1} g_{\bm{\theta}}(k,p_0)$, when $g_{\bm{\theta}}(k,p_0)$ is non-convex w.r.t. $k$.}
    \label{algo:line_search}
    \SetKwFunction{LineSearch}{LineSearch}
    \SetKwFunction{ProjGD}{ProjGD}
    \SetKwFunction{Perturb}{Perturb}
    
    \SetKwProg{Fn}{Function}{:}{}
    \Fn{\LineSearch{$g_{\bm{\theta}}$, $p_0$, $k_{\max}$, $L$, $\delta$}}{
    Initialize $i \leftarrow 0$, $k_i \leftarrow 1$;
    $\tilde{g} \leftarrow \underset{k \in \{1,k_{\max}\}}{\min} g_{\bm{\theta}}(k, p_0)$\; \label{algo:line_search_check_min_init}
    \While{$i<$\textnormal{max\_iters and} $k_i < k_{\max}$}{\label{algo:line_search_loop_begin}
    $g_i, k^-_i \leftarrow \texttt{ProjGD}(g_{\bm{\theta}}, p_0, k_i, k_{\max}, L, \delta)$\;
    $\hat{g}_i, \hat{k}_i, k_{i+1} \leftarrow \texttt{Perturb}(g_{\bm{\theta}}, p_0, k^-_i, k_{\max}, \delta)$\;
    $\tilde{g} \leftarrow \min \{ \tilde{g}, \hat{g}_i\}$\;
    $i \leftarrow i+1$;
    } \label{algo:line_search_loop_end}
    \Return $\tilde{g}$;
    }
    \Fn{\Perturb{$g_{\bm{\theta}}, p_0, k, k_{\max}, L, \delta$}}{
        $\Delta k \leftarrow \sqrt{\frac{2 \delta}{L}}$, $\xi \leftarrow \frac{1}{L}$, $\hat{g} \leftarrow g_{\bm{\theta}}(k, p_0)$, $\hat{k} \leftarrow k$\;
        $k' \leftarrow \min \{k_{\max}, k + \xi |g'_{\bm{\theta}}(k, p_0)| \}$\;
        \While{$|g_{\bm{\theta}}(k, p_0)-g_{\bm{\theta}}(k', p_0)| < \delta$ \textbf{and} $k < k_{\max}$}{
            $k \leftarrow \min\{k_{\max}, k + \Delta k\}$\;
            $k' \leftarrow \min \{k_{\max}, k + \xi |g'_{\bm{\theta}}(k, p_0)| \}$\; \label{algo:perturb_k_prime}
            $\hat{g} \leftarrow \min \{ \hat{g}, g_{\bm{\theta}}(k, p_0)\}$\;
            $\hat{k} \leftarrow \underset{k \in \{\hat{k},k\}}{\mbox{argmin}}\{\hat{g}, g_{\bm{\theta}}(k, p_0)\}$\;
        }
        \Return $\hat{g}, \hat{k}, k$\;
    }
\end{algorithm}

In Algorithm~\ref{algo:line_search} we reuse the \texttt{ProjGD} routine for gradient ascent/descent but with a different (more relaxed) stopping criterion than in Def.~\ref{def:gd_stop_criterion1}. 

\begin{definition}
\label{def:gd_stop_criterion2}
\textit{Stopping criterion 2:} Let $\{k_j\}$ be the iterates generated by the projected gradient descent algorithm (Algorithm~\ref{algo:proj_gd}). 
We use the following \textit{stopping criterion} for projected gradient descent algorithm:
\begin{equation}
    \label{eqn:stopping_criterion2}
    |g_{\bm{\theta}}(k_{j}, p_0) - g_{\bm{\theta}}(k_{j+1}, p_0)| < \delta =: \zeta.
\end{equation}
\end{definition}

\begin{definition}
\label{def:prtrb_stop_criterion}
\textit{Stopping criterion for} \texttt{Perturb} \textit{:}
Let $\{k_j\}$ be a sequence generated by \texttt{Perturb} routine.
\begin{equation}
    |g_{\bm{\theta}}(k_{j}, p_0) - g_{\bm{\theta}}(k'_{j}, p_0)| \geq \delta.
\end{equation}
where $k'_{j} = \min \{k_{\max}, k + \xi |g'_{\bm{\theta}}(k_j, p_0)| \}$ in line \ref{algo:perturb_k_prime} of Algorithm~\ref{algo:line_search}, and $g'_{\bm{\theta}}(z, p_0) = \frac{\partial g'_{\bm{\theta}}(k, p_0)}{\partial k}\Bigr|_{k = z}$.
\end{definition}

Note that the stopping criterion for \texttt{Perturb} is similar to Definition~\ref{def:gd_stop_criterion2}, but with the inequality reversed. Since the stopping criteria of \texttt{ProjGD} and \texttt{Perturb} are complementary, only one of the routines will be active during the execution of Algorithm~\ref{algo:line_search}.

The proof of convergence of Algorithm~\ref{algo:line_search} is given in Appendix~\ref{sec:proof_of_convg_line_search}.

\subsection{Threshold for Symmetric GAD and Depolarizing Channels}
\label{sec:symGADDepolConverseRegion}

\subsubsection{Symmetric GAD Channel}
\label{sec:symGADConverseRegion}

Let us compute converse regions when quantum states are corrupted by GADCs. We only consider symmetric GADC (with $\mu = 1/2$), since its classical capacity is additive; for $\mu \neq 1/2$, the additivity of classical capacity is not known. Substituting classical capacity of symmetric GADC from \eqref{eqn:capacity_gadc} in the necessary condition for $\epsilon$-accuracy in \eqref{eqn:redundancy_bound_k} yields:
\begin{equation}
    g_{\bm{\theta}}(k, p_0) = \frac{c(\epsilon, \eta, n)}{k} - 1 + h_2\left( \frac{1-\sqrt{1-p(k)}}{2} \right) \leq 0.
    \label{eqn:gadc_g}
\end{equation}

In \eqref{eqn:gadc_g}, the last term is monotonic (increasing) in $p$, and $p(k; p_0, \bm{\theta})$ is monotonic (increasing) in $\bm{\theta}$. Therefore, Corollary~\ref{col:converse_region} also holds for symmetric GAD channel. Therefore, for a given $\bm{\theta}$, the threshold $p_{th}(\bm{\theta}, \epsilon)$ can be computed by solving bi-level optimization problem \eqref{eqn:bilevel}. However, we cannot obtain closed-form expressions like for the erasure channel due to the challenge from the binary entropy term in \eqref{eqn:gadc_g}; therefore, the threshold $p_{th}(\bm{\theta}, \epsilon)$ must be computed numerically.
Since, symmetric GAD channel is additive, and scale-dependent noise $p(k; p_0, \bm{\theta})$ is monotonic in $\bm{\theta}$ component-wise, the threshold $p_{th}(\bm{\theta}, \epsilon)$ can be computed using Algorithm~\ref{algo:bisec_gd} (Theorem~\ref{thm:algo1_converge} holds). 

However, since Holevo information of symmetric GADC is concave in $p$, even if $p(k; p_0, \bm{\theta})$ is convex in $k$, unlike the erasure case, $g_{\bm{\theta}}(k, p_0)$ is not convex in $k$. Therefore, the lower-level problem $P_L$ must be solved numerically using Algorithm~\ref{algo:line_search} to obtain the threshold $p_{th}(\bm{\theta}, \epsilon)$ for a given $\bm{\theta}$. For a polynomial noise model described in Section~\ref{sec:scale}, we can compute Lipschitz constant $L$ in closed form for a given $\bm{\theta}$ (Refer to Appendix~\ref{sec:lipschitz_const_gadc} for the derivation).

\subsubsection{Depolarizing Channel}
\label{sec:depolarizingConverseRegion}

In this section, we compute the converse region when computational states are corrupted by depolarizing noise. 
Substituting for the classical capacity of the depolarizing channel from \eqref{eqn:capacity_depolarizing} in \eqref{eqn:redundancy_bound_k}, we obtain
\begin{equation}
    g_{\bm{\theta}}(k, p_0) =  h_2\left(\frac{p(k)}{2} \right) -1 + \frac{c(\epsilon, \eta, n)}{k} \leq 0.
    \label{eqn:depol_g}
\end{equation}

Similar to the symmetric GAD channel, the first term is increasing in $p$, and $p(k; p_0, \bm{\theta})$ is non-decreasing in $\bm{\theta}$. Therefore, Corollary~\ref{col:converse_region} and computation of threshold $p_{th}(\bm{\theta}, \epsilon)$ by solving bi-level optimization problem \eqref{eqn:bilevel} also hold. Also, similar to symmetric GADC, since obtaining closed-form expressions for $p_{th}(\bm{\theta}, \epsilon)$ is not possible, it can be computed using Algorithm~\ref{algo:bisec_gd}. Since $g_{\bm{\theta}}(k, p_0)$ is non-convex (due to $h_2(\cdot)$ in \eqref{eqn:depol_g} being concave), the threshold $p_{th}(\bm{\theta}, \epsilon)$ can be computed using line-search (Algorithm~\ref{algo:line_search}). Again, similar to the symmetric GAD channel, Lipschitz constant $L$ can be computed in closed form for a given $\bm{\theta}$ (Refer to Appendix~\ref{sec:lipschitz_const_depol} for the derivation).

\section{Converse Region}

\label{sec:converse_region}
Beyond the notion of a scalar threshold $p_{th}(\bm{\theta}, \epsilon)$, there exists a more general notion of the converse region, which extends the concept of threshold to multiple parameters simultaneously. We specifically aim to characterize the set of $(p_0,\bm{\theta})$ for which $\epsilon$-accurate computation is not possible. The following corollary to Theorem \ref{thm:finiteLB} provides a converse in terms of $(p_0, \bm{\theta})$.

\begin{corollary}
\label{col:converse_region}
Suppose we have,
\begin{equation}
    \Bar{\Theta} \triangleq \left\{(p_0,\bm{\theta})  \Bigr| \underset{k \geq 1}{\min} \text{   } g_{\bm{\theta}}(k, p_0) \geq 0 \right\},
\end{equation}
where
\begin{equation}
    g_{\bm{\theta}}(k, p_0) \triangleq \frac{c(\epsilon, R_f, n)}{k} - \chi(\mathcal{N}_{p(k)}).
\end{equation}
Then $\epsilon$-accurate computation is not possible for $(p_0, \bm{\theta}) \in \Bar{\Theta}$. Also, if $(p_0,\bm{\theta}) \in \Bar{\Theta}$ then $(p'_0,\bm{\theta}') \in \Bar{\Theta}$ if $(p'_0,\bm{\theta}') \ge (p_0,\bm{\theta})$ in a component-wise sense.
\end{corollary}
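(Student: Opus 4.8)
The statement bundles two logically separate claims: (i) the impossibility (converse) assertion that no fault-tolerance scheme achieves $\epsilon$-accuracy when $(p_0,\bm{\theta})\in\Bar{\Theta}$, and (ii) the upward-closedness of $\Bar{\Theta}$ under the component-wise partial order. The plan is to treat these independently, since both reduce almost immediately to Theorem~\ref{thm:finiteLB} combined with the monotonicity structure imposed in Definition~\ref{def:noise}.

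For (i), I would argue by contraposition directly from Theorem~\ref{thm:finiteLB}. That theorem asserts that any circuit achieving $\epsilon$-accuracy with redundancy $k$ must satisfy $k>c(\epsilon,R_f,n)/\chi(\mathcal{N}_{p(k)})$, which is precisely $g_{\bm{\theta}}(k,p_0)<0$ at that value of $k$. Hence $\epsilon$-accurate computation can be possible \emph{only if} there exists an admissible $k\ge 1$ with $g_{\bm{\theta}}(k,p_0)<0$. But $(p_0,\bm{\theta})\in\Bar{\Theta}$ says exactly that $\min_{k\ge 1} g_{\bm{\theta}}(k,p_0)\ge 0$, i.e. $g_{\bm{\theta}}(k,p_0)\ge 0$ for every $k\ge 1$. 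The two conditions are incompatible, so no scheme can be $\epsilon$-accurate. The only point requiring care is that the $\min$ over $k$ be well-posed; here I would invoke Remark~\ref{remark:compact_k} to restrict the search to the compact interval $[1,k_{\max}]$, so the minimum is attained and "$\min\ge 0$" is equivalent to "$g_{\bm{\theta}}\ge 0$ pointwise."

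For (ii), I would first establish pointwise monotonicity of $g$ in the parameters and then lift it to the minimum. Fix $k\ge 1$ and suppose $(p'_0,\bm{\theta}')\ge(p_0,\bm{\theta})$ component-wise. Property (ii) of Definition~\ref{def:noise} gives $p(k;p'_0,\bm{\theta}')\ge p(k;p_0,\bm{\theta})$, and since saturation by $\min(\cdot,1)$ is monotone, the same ordering survives for the effective noise $p(k)$. As $\chi(\mathcal{N}_p)$ is non-increasing in $p$, this yields $\chi(\mathcal{N}_{p(k;p'_0,\bm{\theta}')})\le\chi(\mathcal{N}_{p(k;p_0,\bm{\theta})})$, hence $g_{\bm{\theta}'}(k,p'_0)\ge g_{\bm{\theta}}(k,p_0)$ for every $k$. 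Taking the minimum over $k$ preserves this inequality, so $\min_k g_{\bm{\theta}'}(k,p'_0)\ge\min_k g_{\bm{\theta}}(k,p_0)\ge 0$, placing $(p'_0,\bm{\theta}')$ in $\Bar{\Theta}$.

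I do not expect a genuine obstacle: once Theorem~\ref{thm:finiteLB} and Definition~\ref{def:noise} are in hand, the result is essentially a bookkeeping corollary. The two subtle alignment points worth flagging are that the strict inequality $g_{\bm{\theta}}<0$ from the redundancy bound must be matched against the non-strict defining inequality $g_{\bm{\theta}}\ge 0$ of $\Bar{\Theta}$ (it is exactly this mismatch that makes the contradiction in (i) clean rather than borderline), and that the $\min(\cdot,1)$ clamping does not disrupt the transfer of the parameter ordering through $\chi$ in (ii), which follows because clamping is itself monotone.
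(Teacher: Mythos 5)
Your proposal is correct and follows essentially the same route as the paper's proof: part (i) is the same contrapositive application of Theorem~\ref{thm:finiteLB} (accuracy forces $g_{\bm{\theta}}(k,p_0)<0$ for some $k$, contradicting $\min_{k\ge 1} g_{\bm{\theta}}(k,p_0)\ge 0$), and part (ii) is the same monotonicity argument combining Definition~\ref{def:noise} with the non-increase of $\chi(\mathcal{N}_p)$ in $p$. Your added remarks on the strict-versus-non-strict inequality and the $\min(\cdot,1)$ clamping are sound refinements of details the paper leaves implicit, not a different method.
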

\begin{proof}
From Definition~\ref{defn:epsaccuracy}, we must prove that if $P_e < \epsilon$, then $(p_0,\bm{\theta}) \notin \Bar{\Theta}$. From Theorem~\ref{thm:finiteLB}, we have if $P_e < \epsilon$, then
\begin{equation}
\begin{split}
    k &> \frac{c(\epsilon, R_f, n)}{\chi(\mathcal{N}_{p(k)})}, \\
    g_{\bm{\theta}}(k, p_0) &= \frac{c(\epsilon, R_f, n)}{k} - \chi(\mathcal{N}_{p(k)}) < 0. \label{eqn:g_ineq}
\end{split}
\end{equation}
For any $\bm{\theta}$, \eqref{eqn:g_ineq} is satisfied only if
\begin{equation}
    \min_{k \geq 1} \, g_{\bm{\theta}}(k, p_0) < 0.
\end{equation}
In other words, $(p_0,\bm{\theta}) \notin \Bar{\Theta}$.

As $\chi(\mathcal{N}_p)$ is non-increasing in $p$ and $p(k;p_0,\bm{\theta})$ is non-decreasing in each component, $(p'_0,\bm{\theta}') \ge (p_0,\bm{\theta})$ in a component-wise sense implies $(p'_0,\bm{\theta}') \in \Bar{\Theta}$ whenever $(p_0,\bm{\theta}) \in \Bar{\Theta}$.
\end{proof}

We refer to $\Bar{\Theta}$ as the \textit{converse region} since $\epsilon$-accurate classical computation  on quantum circuits is not possible if the parameters of the scale-dependent noise are in $\Bar{\Theta}$. As any fault-tolerant implementation has to avoid this region, characterizing $\Bar{\Theta}$ is of particular interest. By Corollary~\ref{col:converse_region}, for characterizing $\Bar{\Theta}$, it is enough to find the minimum $p_0$ for each $\bm{\theta}$ such that $(p_0,\bm{\theta}) \in \Bar{\Theta}$.

\begin{figure}
    \centering
     \includegraphics[scale=0.5]{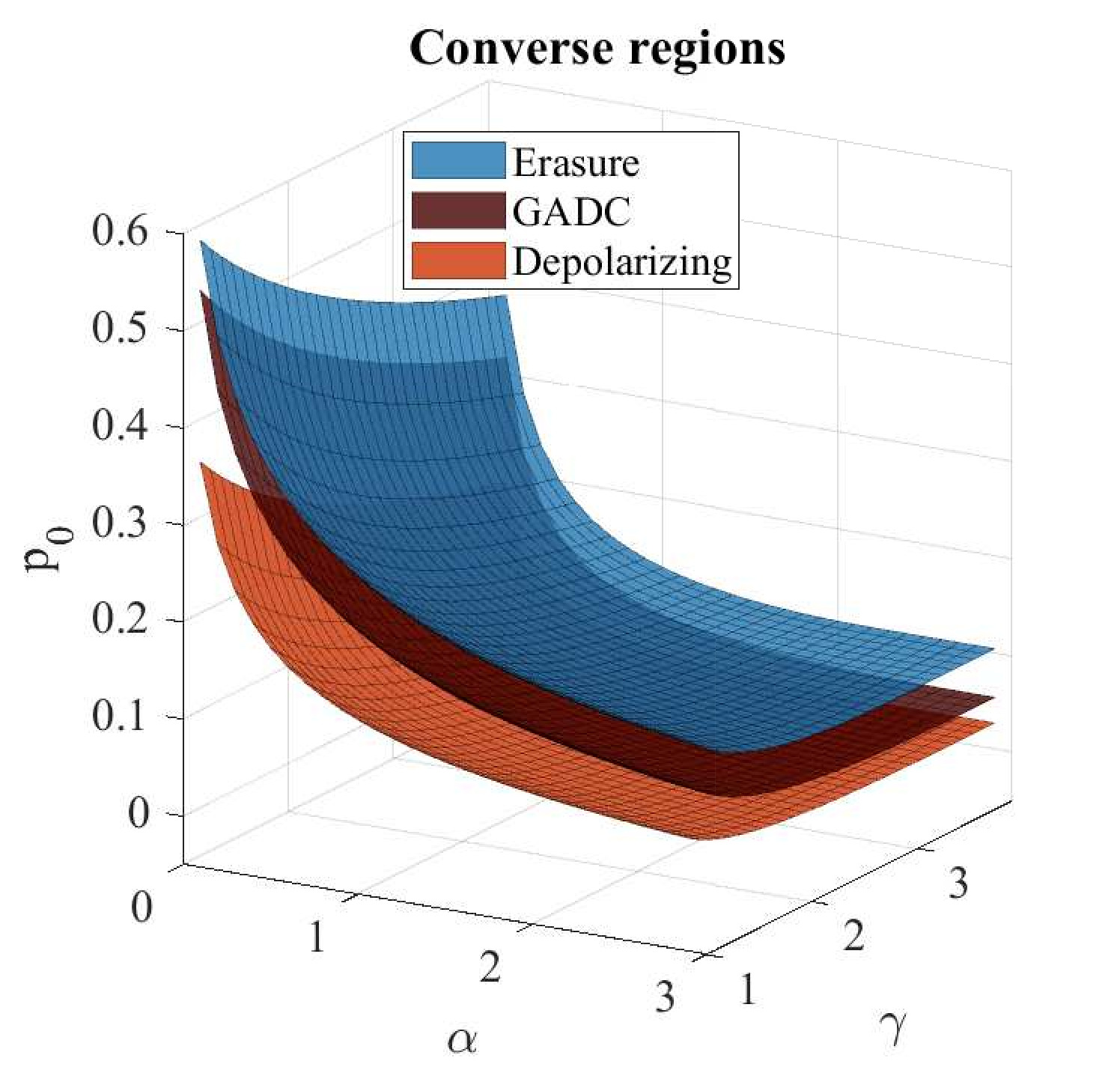}
     \caption{Comparison of converse regions (surfaces) for erasure, depolarizing and symmetric GADC with $\epsilon=0.1$, $\log_2R_f = n = 128$. The probability of error per physical qubit is assumed to scale with redundancy $k$ as $p(k;p_0, \bm{\theta}) = p_0 (1+\alpha (k-1))^{\gamma}$.}
    \label{fig:erasure_depol_gadc_surf_plot}
\end{figure}

Fig.~\ref{fig:erasure_depol_gadc_surf_plot} shows the converse regions when quantum computation is affected by erasure, depolarizing, and generalized amplitude damping noise. These are obtained by solving \eqref{eqn:bilevel} for $\epsilon=0.1$ using the aforementioned provably accurate algorithms. For a given $\bm{\theta} = (\alpha, \gamma)$ the thresholds are related as $p_{th}(\bm{\theta}, \epsilon)^{(e)} \geq p_{th}(\bm{\theta}, \epsilon)^{(g)} \geq p_{th}(\bm{\theta}, \epsilon)^{(d)}$ (point-wise), where the superscripts stand for erasure, symmetric GAD, and depolarizing channels, respectively. This relation is expected since Holevo information of the channels are related for a given $p \in (0,1)$ as $\chi^{(e)}(\mathcal{N}_p) \geq \chi^{(g)}(\mathcal{N}_p) \geq \chi^{(d)}(\mathcal{N}_p)$ (point-wise).

{The converse region only indicates that achieving $\epsilon$-accurate computation is not possible for the scaling parameters within that specific region. However, it does not imply that $\epsilon$-accurate computation is automatically feasible outside this region. It is important to note that using the bound only shrinks the size of the converse region compared to the actual impossibility region. The actual impossibility region could potentially be larger and also subsume the converse region.}

\section{Conclusion}
\label{sec:conclusion}
We considered a model of quantum circuits where inputs and outputs are classical, which includes a large class of algorithms due to Deutsch-Jozsa, Grover, and Shor. Using the example of the phase estimation circuit, we demonstrated that the currently best-known redundancy lower bounds for quantum computation are not applicable for quantum circuits with classical input and output. Then, we considered the scenario where quantum states are corrupted by i.i.d.\ (Holevo) additive quantum noise. We established a connection between the problem of noisy computation and noisy classical communication over a quantum channel and obtained a non-asymptotic lower bound on redundancy. 

Using this bound we studied fault-tolerant quantum computation under scale-dependent noise, where noise increases with added redundancy due to sharing of the limited physical resource. This led to two interesting thresholds on the original noise   $p_0$, beyond which, (i) redundancy impacts only adversely and (ii) no fault-tolerant scheme can achieve $\epsilon$-accuracy for a given $\epsilon$, respectively. The first threshold is similar to the one studied in \cite{fellous2021limitations} for concatenated codes, however, it is applicable to all fault-tolerant schemes. For the second threshold, we derived closed-form expressions whenever possible, and for other cases, we used optimization techniques for numerical characterizations.

{
In our derivation of the redundancy lower bound, certain relaxations were introduced to the noisy computation model. As a result, a potential limitation of our approach may be that the bound is not tight. In this paper, we have not explored the tightness of the bound, leaving it as a potential area for future research extensions. Future work could focus on either tightening the redundancy lower bound or establishing an achievability bound (redundancy upper bound) and characterizing the gap between the two, and its implications on the noise threshold.
}

\appendix

\section{Holevo Capacities of Erasure, Depolarizing, and GAD Channels}
\label{apndx:cptp_holevo_channels}
Here, we consider only Holevo-additive channels characterized by a single parameter $p \in [0,1]$ and whose Holevo capacity is monotonically decreasing in $p$. The candidate channels that we specifically study are (i) $p$-erasure, (ii) $p$-depolarizing and (iii) symmetric generalized amplitude damping channels, i.e., GADC$(p,\frac{1}{2})$.

\paragraph{Erasure Channel} In a quantum erasure channel (QEC), each qubit flips to $\ketbra{e}$, which is orthogonal to every $\rho \in L(\mathbb{C}^d)$, with probability $p$. Therefore, whenever a qubit gets corrupted, the location of corruption is known.
\begin{equation}
    \mathcal{N}_p(\rho) = (1-p) \rho + {p} \text{Tr}[\rho] \ketbra{e}.
\end{equation}

The classical capacity is \cite{khatri2020principles}:
\begin{equation}
    \chi(\mathcal{N}_p) = 1-p.
    \label{eqn:capacity_erasure}
\end{equation}

\paragraph{Depolarizing Channel}
When a qubit undergoes depolarizing noise, it is replaced by a maximally mixed state $I/2$ with probability $p$ \cite{khatri2020principles}: 
\begin{equation}
    \mathcal{N}_p(\rho) = (1-p) \rho + \tfrac{p}{2} I.
\end{equation}
In contrast to the erasure channel, the receiver (or the decoder) is not aware of the location of the error. The Holevo information of the depolarizing channel is:
\begin{equation}
    \chi(\mathcal{N}_p) = 1-h_2\left(\tfrac{p}{2}\right),
    \label{eqn:capacity_depolarizing}
\end{equation}
where $h_2(\cdot)$ is the binary entropy function. Note that the Holevo information is similar to the capacity of a binary symmetric channel with crossover probability $p/2$.

\paragraph{Generalized Amplitude Damping Channel (GADC)} Amplitude damping channels model the transformation of an excited atom to ground state by spontaneous emission of photons. The changes are expressed using $\ket{0}$ for the ground (no photon) state and $\ket{1}$ for the excited state. If the initial state of the environment $\ketbra{0}$, is replaced by the state $\theta_\mu \triangleq (1-\mu) \ketbra{0} + \mu \ketbra{1}, \mu \in [0,1]$ where, $\mu$ is thermal noise, we get the generalized ADC described using the following four Kraus operators \cite{khatri2020principles}:
\begin{equation}
\begin{split}
    A_1 &= \sqrt{1-\mu} \begin{bmatrix}
    1 & 0\\ 0 & \sqrt{1-p}
    \end{bmatrix}, \quad
    A_2 = \sqrt{1-\mu} \begin{bmatrix}
    0 & \sqrt{p}\\ 0 & 0
    \end{bmatrix}, \nonumber \\
     A_3 &= \sqrt{\mu} \begin{bmatrix}
    \sqrt{1-p} & 0\\ 0 & 1
    \end{bmatrix},  
    \quad A_4 = \sqrt{\mu} \begin{bmatrix}
    0 & 0\\ \sqrt{p} & 0
    \end{bmatrix}. 
\end{split}
\end{equation}
GADC is not additive in general (for arbitrary $\mu$). However, in the special case of symmetric generalized amplitude damping, i.e., generalized amplitude damping with  $\mu=1/2$, it is a Holevo additive channel. The classical capacity of symmetric GADC ($\mu=1/2$) is \cite{khatri2020information}:
\begin{equation}
    \chi(\mathcal{N}_p) = 1-h_2\left( \tfrac{1-\sqrt{1-p}}{2} \right),
    \label{eqn:capacity_gadc}
\end{equation}
where $p$ is the probability an atom decays from excited to ground state.

\begin{remark}
Note that we have used $p$ to describe different impairments in different channels, so  $p$ must be interpreted appropriately based on context.
\end{remark}

\section{Proof of Theorem~\ref{col:closed_erasure}}

\label{sec:erasure_closed_form}

The Holevo capacity of erasure channel is $\chi(\mathcal{N}_{p(k)})=1-p(k)$. Therefore, 
\begin{equation}
    g_{\bm{\theta}}(k,p_0) = \frac{c(\epsilon,R_f,n)}{k} + p(k) - 1. \label{Eq:g}
\end{equation}
If $\bm{\theta} \in \bar{\Theta}$ (converse region), then from Theorem~\eqref{col:converse_region} the following holds:

\begin{equation}
    g_{\bm{\theta}}(k, p_0) = \frac{c(\epsilon,R_f,n)}{k} + p(k) - 1 \geq 0, \quad \forall k \geq 1. \label{Eq:g}
\end{equation}
Differentiating w.r.t.\ $k$ and equate to 0 (to find stationary point),
\begin{equation}
         g'_{\bm{\theta}}(k, p_0) = -\frac{c(\epsilon,R_f,n)}{k^2} + p'(k;p_0, \bm{\theta}) = 0. \label{Eq:c}
\end{equation}
Henceforth, we shall use $c=c(\epsilon,R_f,n)$ for brevity. For a fixed $\bm{\theta} =  \alpha$ and $\gamma$ (respectively), the thresholds $p_{th}(\bm{\theta}, \epsilon)$ are derived for some well-behaved $p(k; p_0, \bm{\theta})$ as follows:
\begin{enumerate}
  
    \item  $p(k; p_0, \bm{\theta}) = p_0(1+\alpha (k-1))$: \\ 
    Suppose \eqref{Eq:g} holds for some $p_0$, then:
    \begin{equation}
    \begin{split}
        g_{\bm{\theta}}(k, p_0) &\geq 0, \quad \mbox{ for } k=1, \\
        c + p_0 -1 & \geq 0, \\
        p_0 &\geq 1-c. \label{eq:1linearcase}
    \end{split}
    \end{equation}
    Note that $k_{\max} = 1 + \alpha^{-1} ( (p_0)^{-1} - 1 )$ (obtained by solving for $k$ in $p(k;p_0, \bm{\theta}) = 1$). Also since $p(\cdot; p_0, \bm{\theta})$ is linear, $g_{\bm{\theta}}(., p_0)$ is convex in $[1,k_{\max}]$. Therefore, $g_{\bm{\theta}}(k, p_0)$ is minimized at any one of $k=1$, $k=k_{\max}$ or a stationary point in $(1, k_{\max})$. Substituting $p'(k;p_0, \bm{\theta}) = p_0 \alpha$ in \eqref{Eq:c}, the stationary point is:
    \begin{equation}
        k = \sqrt{\frac{c}{p_0 \alpha}}. \label{eq.linearroot}
    \end{equation}
          
        \begin{enumerate}[label=(\alph*)]
        \item Note that for $k \in (1, k_{\max})$ to be the minimum, $g'(k,p_0,\bm{\theta}, \epsilon)|_{k=1} < 0$. Also, noting that $p_0 \geq 1-c$ (from \eqref{eq:1linearcase}), we obtain
        \begin{equation}
            (1-c) \alpha \leq p_0 \alpha < c.
        \end{equation}
        Therefore, $\alpha < \frac{c}{1-c}$. Substituting \eqref{eq.linearroot} in \eqref{Eq:g}:
        \begin{equation}
        \begin{split}
            p_0 &\geq \frac{\left( \sqrt{c\alpha} - \sqrt{c\alpha - \alpha +1} \right)^2}{(\alpha -1)^2}, \nonumber \\
            p_{th}(\bm{\theta}, \epsilon) &= \frac{\left( \sqrt{c\alpha} - \sqrt{c\alpha - \alpha +1} \right)^2}{(\alpha -1)^2}. \nonumber
        \end{split}
        \end{equation}
        Note that since $\alpha < \frac{c}{1-c}$, the second term in the numerator, $c\alpha - \alpha +1 = 1-c \geq 0$. Therefore, the threshold $p_{th}(\bm{\theta}, \epsilon)$ exists.
        \item If $\alpha \geq \frac{c}{1-c}$, then
        \begin{align*}
            k &= \sqrt{\frac{c}{p_0 \alpha}} \leq \sqrt{\frac{1-c}{p_0}}; k \geq 1, \\
            p_0 &\leq 1 - c.
        \end{align*}
        However, $p_0 \geq 1-c$ from \eqref{eq:1linearcase}. Therefore, $p_{th}(\bm{\theta}, \epsilon) = 1-c$.
        \end{enumerate}
    \item $p(k;p_0, \bm{\theta}) = p_0k^{\gamma}$:
    \\ Here, $\bm{\theta} = \gamma$. The value of k ranges from $1 \leq k \leq \left(\frac{1}{p_0}\right)^{\frac{1}{\gamma}}$. For the choice of $p_{th}(\bm{\theta}, \epsilon)$, \eqref{Eq:g} must hold for all $k$ in this range. Similar to linear case, for this choice of $p(k)$ and range of $k$, \eqref{Eq:g} is convex. Hence for $p_{th}(\bm{\theta}, \epsilon)$, 
    \begin{equation}
    \begin{split}
        g_{\bm{\theta}}(k, p_0) \mid _{k=1} & \geq 0 \\
        c + p_0 -1 & \geq 0 \\
        p_0 & \geq 1-c. \label{eq:1expcase}
    \end{split}
    \end{equation}
    Substituting $p'(k;p_0, \bm{\theta}) = p_0\gamma k^{\gamma - 1}$ in \eqref{Eq:c}, we obtain the stationary point as: 
    \begin{equation}
        k = \left(\frac{c}{p_0\gamma}\right)^{\frac{1}{\gamma + 1}}.
        \label{eq:exproot}
    \end{equation}
    Similar to linear $p(k;p_0, \bm{\theta})$, there are two cases:
        \begin{enumerate}[label=(\alph*)]
            \item If $\gamma < \frac{c}{1-c}$,\\
            Substituting the stationary point computed in \eqref{eq:exproot} in \eqref{Eq:g}, the threshold $p_{th}(\bm{\theta}, \epsilon)$ can be computed as:
            \begin{align*}
                & p_0 \geq \frac{\left(\frac{\gamma}{c} \right)^{\gamma}}{\left(\gamma + 1 \right)^{\gamma + 1}}, \\
                & p_{th}(\bm{\theta}, \epsilon) = \frac{\left(\frac{\gamma}{c} \right)^{\gamma}}{\left(\gamma + 1 \right)^{\gamma + 1}}. 
            \end{align*}
            \item If $\gamma \geq \frac{c}{1-c}$, then,
            \begin{align*}
                k &= \left(\frac{c}{p_0\gamma}\right)^{\frac{1}{\gamma + 1}} \leq \left(\frac{1-c}{p_0}\right)^{\frac{1}{\gamma + 1}};~k \geq 1, \\
                p_0 &\leq 1-c.
            \end{align*}
            However, $p_0 \geq 1-c$ from \eqref{eq:1expcase}. Therefore, $p_{th}(\bm{\theta}, \epsilon) = 1-c$.
        \end{enumerate}
\end{enumerate}

\section{Proof of Convergence of Algorithm~\ref{algo:bisec_gd}}
\label{sec:proof_of_convg_bisec}

The following theorem provides a proof of global convergence of Algorithm~\ref{algo:bisec_gd}, with only a monotonicity assumption in $\bm{\theta}$ (note that continuity in $\bm{\theta}$ is not needed).

\begin{theorem}
Suppose a quantum circuit is corrupted by a scale-dependent noise-per-physical qubit, $p(k; \bm{\theta})$ that is monotonic in $\bm{\theta}$. Then for any given $\bm{\theta}$, the sequence $\{p_{0_i}\}$ generated using Algorithm~\ref{algo:bisec_gd} converges to the threshold $p_{th}(\bm{\theta}, \epsilon)$.
\label{thm:algo1_converge}
\end{theorem}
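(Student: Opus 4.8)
The plan is to establish convergence of the bisection method in Algorithm~\ref{algo:bisec_gd} by exploiting the monotonicity structure inherited from Corollary~\ref{col:converse_region}. The key observation is that the feasibility condition $g^*_{\bm{\theta}}(p_0) = \min_{k\ge 1} g_{\bm{\theta}}(k,p_0) \ge 0$ that determines membership in the converse region is \emph{monotone} in $p_0$: since $\chi(\mathcal{N}_p)$ is non-increasing in $p$ and $p(k;p_0,\bm{\theta})$ is non-decreasing in $p_0$ (Definition~\ref{def:noise}), increasing $p_0$ can only decrease each $g_{\bm{\theta}}(k,p_0)$ pointwise, and hence can only decrease the minimum $g^*_{\bm{\theta}}(p_0)$. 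First I would make this precise: for fixed $\bm{\theta}$, the function $p_0 \mapsto g^*_{\bm{\theta}}(p_0)$ is non-increasing in $p_0$.

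Given monotonicity, the threshold $p_{th}(\bm{\theta}, \epsilon)$ is exactly the crossover point where $g^*_{\bm{\theta}}(p_0)$ transitions from positive to negative (or equivalently, the infimum of $p_0$ for which $(p_0,\bm{\theta})\in\bar\Theta$, matching the definition via \eqref{eqn:bilevel}). The bisection routine maintains an interval $[p^-_0, p^+_0]$ with the invariant that $g^*_{\bm{\theta}}(p^+_0) > 0$ and $g^*_{\bm{\theta}}(p^-_0) \le 0$, so that $p_{th}(\bm{\theta},\epsilon) \in [p^-_0, p^+_0]$. Next I would verify that this invariant is preserved by the update: at each step the midpoint $p_0$ is tested against $g^*_{\bm{\theta}}(p_0)>0$, and by monotonicity the correct half-interval is retained, keeping $p_{th}$ bracketed. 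Since the interval width $\Delta p_0$ halves each iteration, it shrinks geometrically to zero, so the generated sequence $\{p_{0_i}\}$ converges to $p_{th}(\bm{\theta},\epsilon)$ up to the tolerance $\delta_{p_0}$.

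I expect the main obstacle to be handling the oracle call to $P_L$ in line~\ref{algo:PL} correctly, i.e.\ ensuring that the subroutine actually returns the true global minimum $g^*_{\bm{\theta}}(p_0)$ so that the sign test is reliable. This matters because monotonicity of the \emph{minimum} in $p_0$ is what drives correctness, and if $P_L$ is solved only approximately or to a local optimum, the bracketing invariant could fail. I would address this by invoking the convergence guarantees of Algorithm~\ref{algo:proj_gd} (convex case) or Algorithm~\ref{algo:line_search} (non-convex case), whose correctness is established in Appendices~\ref{sec:proof_of_convg_proj_gd} and~\ref{sec:proof_of_convg_line_search}, and by noting from Remark~\ref{remark:compact_k} that the minimization may be restricted to the compact interval $[1,k_{\max}]$, guaranteeing the minimum is attained. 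A secondary subtlety is the boundary case where $g^*_{\bm{\theta}}$ never changes sign on $[0,1]$ (e.g.\ $\epsilon$-accuracy infeasible for all $p_0$, or feasible for all $p_0$); here I would argue the bisection still converges to the appropriate endpoint, consistent with the threshold being defined as a minimizer over the constrained set. Notably, the argument uses only monotonicity in $\bm{\theta}$ and $p_0$ and requires no continuity assumption, as the theorem statement emphasizes.
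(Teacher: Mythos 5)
Your overall architecture is exactly the paper's: prove monotonicity of $p_0 \mapsto g^*_{\bm{\theta}}(p_0)$, show the bisection maintains a bracketing invariant, and conclude from interval halving that both endpoint sequences converge to a common limit that must be the threshold. However, your key monotonicity lemma has the direction reversed, and as stated it is false. Since $g_{\bm{\theta}}(k,p_0) = \frac{c(\epsilon,R_f,n)}{k} - \chi(\mathcal{N}_{p(k)})$ and the Holevo term enters with a \emph{minus} sign, increasing $p_0$ increases $p(k;p_0,\bm{\theta})$ (Definition~\ref{def:noise}), decreases $\chi(\mathcal{N}_{p(k)})$, and therefore \emph{increases} $g_{\bm{\theta}}(k,p_0)$ pointwise in $k$; hence $g^*_{\bm{\theta}}(\cdot)$ is non-\emph{de}creasing in $p_0$ (as the paper's proof in Appendix~\ref{sec:proof_of_convg_bisec} states), and the threshold is the crossover from negative to non-negative, not ``from positive to negative.'' This is also what makes the converse region $\bar{\Theta}$ sit at \emph{large} noise, as it should. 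Note that your claimed non-increasing monotonicity contradicts your own bracketing invariant: with $p^-_0 < p^+_0$, the conditions $g^*_{\bm{\theta}}(p^-_0)\le 0$ and $g^*_{\bm{\theta}}(p^+_0) > 0$ are impossible for a non-increasing function, so the step where you ``verify the invariant is preserved'' would fail if carried out with your lemma.

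Once the sign is corrected, your proof coincides with the paper's: Algorithm~\ref{algo:bisec_gd} generates a non-increasing sequence $\{p^+_{0_i}\}$ with $g^*_{\bm{\theta}}(p^+_{0_i})>0$ and a non-decreasing sequence $\{p^-_{0_i}\}$ with $g^*_{\bm{\theta}}(p^-_{0_i})\le 0$; halving forces both to a common limit $p^*$; and the (correct) monotonicity of $g^*_{\bm{\theta}}$ yields $g^*_{\bm{\theta}}(p_0)<0$ for $p_0<p^*$ and $g^*_{\bm{\theta}}(p_0)\ge 0$ for $p_0>p^*$, identifying $p^*=p_{th}(\bm{\theta},\epsilon)$. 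Your two side remarks go beyond the paper's proof but are sensible: the paper's Theorem~\ref{thm:algo1_converge} simply assumes oracle access to the exact $g^*_{\bm{\theta}}(p_0)$, delegating $\delta$-accurate computation of the inner minimum to Theorems~\ref{thm:proj_gd_converge} and~\ref{thm:line_search_converge} and compactness of $[1,k_{\max}]$ to Remark~\ref{remark:compact_k}, and it does not treat the boundary case where $g^*_{\bm{\theta}}$ never changes sign on $[0,1]$. Neither remark repairs the sign error, which is the one genuine defect in the proposal.
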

\begin{proof}
Algorithm \ref{algo:bisec_gd} generates a non-increasing sequence $\{p^+_{0_i}\}$ and a non-decreasing sequence $\{p^-_{0_i}\}$, which at every iteration yields $g^*(p^+_{0_i}) \geq 0$ and $g^*(p^-_{0_i}) < 0$, with $p_{0_i} = \tfrac{p^+_{0_i} + p^-_{0_i}}{2}$. Since the bisection method halves the difference between $p^+_{0_i}$ and $p^-_{0_i}$ at every iteration (i.e., $p^+_{0_{i+1}}-p^-_{0_{i+1}} = \frac{p^+_{0_{i}}-p^-_{0_{i}}}{2}$), we have that for all $\epsilon > 0$, there exists an $i_0$ such that for all $i \geq i_0$,  we get $p^+_{0_i} - p^-_{0_i} < \epsilon$. Also, since both $\{p^+_{0_i}\}$ and $\{p^-_{0_i}\}$ are bounded, they converge, and since for all $i \geq i_0$,  $p^+_{0_i} - p^-_{0_i} < \epsilon$, they converge to a common limit point (say $p^*$). Due to the monotonicity of $g^*_{\bm{\theta}}(p_0)$ (non-decreasing with $p_0$), the following inequality holds: $g^*_{\bm{\theta}}(p^-_{0_i}) \leq g^*_{\bm{\theta}}(p^*_0) \leq g^*_{\bm{\theta}}(p^+_{0_i})$. Therefore, $g^*_{\bm{\theta}}(p_0)<0$, for all $p_0 < p^*$, and $g^*_{\bm{\theta}}(p_0) \geq 0$, for all $p_0 > p^*$, which is by definition $p^* = p_{th}(\bm{\theta}, \epsilon)$.
\end{proof}

\section{Proof of Convergence of Algorithm~\ref{algo:proj_gd}}

\label{sec:proof_of_convg_proj_gd}

The following theorem provides proof of convergence of Algorithm~\ref{algo:proj_gd}. For better readability, the associated lemmas used in the proof are included in Section~\ref{apndx:lemmas_grad_desc}.

\begin{theorem}
\label{thm:proj_gd_converge}
\textit{Convergence of Algorithm~\ref{algo:proj_gd}:} Suppose $g^*_{\bm{\theta}}(p_0) = \underset{k \geq 1}{\min}~g_{\bm{\theta}}(k, p_0)$, which is convex in $k$. Then Algorithm~\ref{algo:proj_gd} yields $\tilde{g}$ arbitrarily close to $g^*_{\bm{\theta}}(p_0)$, i.e., for any pre-determined $\delta > 0$, $|\tilde{g} - g^*_{\bm{\theta}}(p_0)| \leq \delta$.
\end{theorem}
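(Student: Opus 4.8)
The plan is to treat Algorithm~\ref{algo:proj_gd} as a constant-step projected gradient method on the one-dimensional convex, $L$-smooth objective $g_{\bm{\theta}}(\cdot,p_0)$ restricted to the compact interval $[1,k_{\max}]$, and to show that the stopping criterion of Definition~\ref{def:gd_stop_criterion1} forces the returned value $\tilde g = g_{\bm{\theta}}(k_{j+1},p_0)$ to lie within $\delta$ of the optimum $g^*_{\bm{\theta}}(p_0)$. First I would record the two structural facts that drive everything: (i) convexity of $g_{\bm{\theta}}(\cdot,p_0)$ in $k$, and (ii) the descent lemma following from $g'_{\bm{\theta}}(\cdot,p_0)$ being $L$-Lipschitz, namely $g_{\bm{\theta}}(y,p_0) \le g_{\bm{\theta}}(x,p_0) + g'_{\bm{\theta}}(x,p_0)(y-x) + \tfrac{L}{2}(y-x)^2$ for $x,y\in[1,k_{\max}]$. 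These are exactly the properties established (for erasure noise) in the Lipschitz-constant derivation and the lemmas collected in Section~\ref{apndx:lemmas_grad_desc}, which I would invoke rather than re-derive.

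Second, with the step size fixed at $\xi = 1/L$, I would establish monotone descent of the iterates: substituting the update $k_{j+1} = \Pi_{[1,k_{\max}]}(k_j - \tfrac{1}{L}g'_{\bm{\theta}}(k_j,p_0))$ into the descent lemma and using non-expansiveness of the projection $\Pi_{[1,k_{\max}]}$ yields $g_{\bm{\theta}}(k_{j+1},p_0) \le g_{\bm{\theta}}(k_j,p_0) - \tfrac{1}{2L}\,\|G(k_j)\|^2$, where $G(k_j) := L\,(k_j - k_{j+1})$ is the gradient mapping. Hence $\{g_{\bm{\theta}}(k_j,p_0)\}$ is non-increasing and bounded below (continuous objective on a compact set), so it converges; moreover the per-step decrease $\Delta g_j := g_{\bm{\theta}}(k_j,p_0)-g_{\bm{\theta}}(k_{j+1},p_0) \ge \tfrac{1}{2L}\|G(k_j)\|^2$ is summable, which forces $\Delta g_j \to 0$ and guarantees the stopping test is met in finitely many iterations.

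Third comes the core estimate converting the stopping test into an accuracy guarantee. From $\Delta g_j \ge \tfrac{1}{2L}\|G(k_j)\|^2$, the termination condition $\Delta g_j < \zeta = \tfrac{\delta^2}{2Lk_{\max}^2}$ gives $\|G(k_j)\| \le \delta/k_{\max}$. I would then use the standard convex inequality for the gradient mapping, $g_{\bm{\theta}}(k_{j+1},p_0) - g^*_{\bm{\theta}}(p_0) \le \langle G(k_j), k_j - k^*\rangle \le \|G(k_j)\|\,|k_j - k^*|$, and bound the diameter by $|k_j - k^*| \le k_{\max}-1 \le k_{\max}$, so that $\tilde g - g^*_{\bm{\theta}}(p_0) \le \tfrac{\delta}{k_{\max}}\cdot k_{\max} = \delta$; since $\tilde g \ge g^*_{\bm{\theta}}(p_0)$ this is precisely $|\tilde g - g^*_{\bm{\theta}}(p_0)|\le\delta$. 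This also explains why the stopping threshold $\zeta$ carries the factor $\tfrac{1}{2Lk_{\max}^2}$.

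The main obstacle I anticipate is handling the projection correctly when the minimizer lies on the boundary of $[1,k_{\max}]$ (or when an iterate is clipped). A raw gradient $g'_{\bm{\theta}}(k_j,p_0)$ need not be small at an optimal boundary point, so the argument cannot use $|g'_{\bm{\theta}}|$ directly; this is exactly why the gradient-mapping quantity $G(k_j)=L(k_j-k_{j+1})$ must be used throughout, since it vanishes at a constrained stationary point and correctly certifies optimality at the endpoints. I would also take care to justify that the update written as $k_{j+1}=\min\{k_{\max},\max\{1,k_j+\xi|g'_{\bm{\theta}}(k_j,p_0)|\}\}$ coincides with the projected \emph{descent} step under the stated w.l.o.g.\ convention that $g_{\bm{\theta}}(\cdot,p_0)$ is decreasing at the initialization $k_{in}=1$ (so $|g'_{\bm{\theta}}|=-g'_{\bm{\theta}}$ along the iterates), which aligns the sign in the update with the gradient-mapping analysis.
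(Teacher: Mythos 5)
Your proposal is correct, and its skeleton is the same as the paper's: the descent lemma with step $\xi = 1/L$ converts the stopping threshold $\zeta = \delta^2/(2Lk_{\max}^2)$ of Definition~\ref{def:gd_stop_criterion1} into a bound of $\delta/k_{\max}$ on a gradient-type quantity, and convexity plus the diameter bound $k_{\max}$ then gives $\delta$-optimality --- exactly the content of the paper's Lemma~\ref{lemma:stop_boundgrad} and Lemma~\ref{lemma:converge_bounded_grad}. Where you genuinely diverge is the treatment of the constraint. You run the whole argument through the gradient mapping $G(k_j) = L(k_j - k_{j+1})$, which handles clipped iterates and boundary minimizers uniformly, since $g_{\bm{\theta}}(k_{j+1},p_0) - g^*_{\bm{\theta}}(p_0) \le G(k_j)\,(k_j - k^*)$ holds whether or not the step was projected; the paper instead works with the raw derivative $g'_{\bm{\theta}}(k_j,p_0)$, which is only a valid certificate at interior iterates, and disposes of the boundary cases $k_l \in \{1, k_{\max}\}$ by a separate argument resting on the no-crossing property (Lemma~\ref{lemma:gd_nocross_stationary}), which guarantees an iterate parked at an endpoint is exactly optimal there. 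Your route buys the cleaner, unified textbook analysis and also yields finite termination for free from summability of the per-step decreases (the paper obtains this separately, via Corollary~\ref{col:min_delk_convex}); the paper's route is more elementary --- first-order convexity with raw gradients, no projection calculus --- at the price of the case split. Two small points of care: first, the sufficient-decrease inequality $g_{\bm{\theta}}(k_{j+1},p_0) \le g_{\bm{\theta}}(k_j,p_0) - \tfrac{1}{2L}|G(k_j)|^2$ follows from the projection's variational characterization rather than non-expansiveness per se, though on an interval this is immediate to verify directly; second, the caveat you flag at the end is not fully escapable by convention alone --- the update as written moves by $+\xi\,|g'_{\bm{\theta}}(k_j,p_0)|$ and so coincides with the projected descent step $\min\{k_{\max},\max\{1, k_j - \tfrac{1}{L} g'_{\bm{\theta}}(k_j,p_0)\}\}$ only while $g'_{\bm{\theta}}(k_j,p_0) \le 0$, and the preservation of that sign along the iterates is precisely Lemma~\ref{lemma:gd_nocross_stationary}, so your identification of the update with a projected descent step quietly relies on the same lemma the paper invokes, in addition to the w.l.o.g.\ convention at $k_{in}=1$.
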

\begin{proof}
Let $\{1,\ldots,k_l\}$ be a sequence generated by projected gradient descent, \texttt{ProjGD}, where $k_l$ satisfies the stopping criterion. Note that \texttt{ProjGD} does not cross any stationary point if the step-size $\xi \leq \frac{1}{L}$ (from Lemma~\ref{lemma:gd_nocross_stationary}). So, $k_l = 1$ if and only if $\tilde{g} = g^*_{\bm{\theta}}(p_0) = g_{\bm{\theta}}(1, p_0)$, and similarly $k_l = k_{\max}$ if and only if $\tilde{g} = g^*_{\bm{\theta}}(p_0) = g_{\bm{\theta}}(k_{\max}, p_0)$. Otherwise $k_l \in (1, k_{\max})$ and $g_{\bm{\theta}}(1, p_0) < 0$, which implies from Lemma~\ref{lemma:gd_nocross_stationary} that $g_{\bm{\theta}}(k_l, p_0) \leq 0$. From Lemmas~\ref{lemma:stop_boundgrad} and \ref{lemma:converge_bounded_grad}, $k_l$ satisfying the stopping criterion in Def.~\ref{def:gd_stop_criterion1} is sufficient for convergence, i.e., $\tilde{g} = g_{\bm{\theta}}(k_l, p_0)$ and $|\tilde{g} - g^*_{\bm{\theta}}(p_0)| \leq \delta$.
\end{proof}

The following lemma shows $g'_{\bm{\theta}}(k,p_0)$ is indeed $L$-Lipschitz over $[1, k_{\max}]$ for a general polynomial noise model and gives a closed-form expression for $L$.

\begin{lemma}
\label{lemma:lipschitz_const_erasure}
\textit{Computing Lipschitz constant $L$}: $g'_{\bm{\theta}}(k,p_0)$ is $L$-Lipschitz over $[1, k_{\max}]$ for scale-dependent erasure noise $p(k;p_0, \bm{\theta}) = p_0 (1+\alpha (k-1))^\gamma$ with $\gamma \geq 0$ where, for $c=c(\epsilon,\eta,n)$,
\begin{equation}
    L = 2c + \alpha^2 \gamma |\gamma-1| p_0^{\min\left\{1, \frac{2}{\gamma}\right\}}.
    \label{eqn:lipschitz_const_erasure}
\end{equation}
\begin{proof}
Let $g'_{\bm{\theta}}$ and $g''_{\bm{\theta}}$ denote the partial derivatives $\frac{\partial}{\partial k} g_{\bm{\theta}}(k,p_0)$ and $\frac{\partial^2}{\partial k^2} g_{\bm{\theta}}(k,p_0)$, respectively. The magnitude of the second order partial derivative is bounded above as:
\begin{equation}
    |g''_{\bm{\theta}}| \leq \underset{k}{\sup} \left|\frac{2c}{k^3}\right| + \underset{k}{\sup} \left| p''(k;p_0, \bm{\theta}) \right|,
\end{equation}
where the inequality follows from triangle inequality and maximizing each summand.
Observe that the first summand is maximized when $k=1$, and the second term is bounded above as
\begin{equation}
\begin{split}
    p''(k;p_0, \bm{\theta}) & \leq 
    \begin{cases}
        \alpha^2 \gamma~|\gamma-1|~p_0, & 0 \leq \gamma < 2,\, k=1, \mbox{ and}\\
        \alpha^2 \gamma~|\gamma-1|~p_0^{2/\gamma}, & \gamma \geq 2,\, k = k_{\max},
    \end{cases} \\
    & \leq \alpha^2 \gamma |\gamma-1| p_0^{\min\left\{1, \frac{2}{\gamma}\right\}}, \quad\,\,\,\, \gamma \geq 0,
\end{split}
\end{equation}
where $k_{\max} = 1+\alpha^{-1} (p_0^{-(1/\gamma)}-1)$. Therefore,
\begin{equation}
    g''_{\bm{\theta}} \leq 2c + \alpha^2 \gamma~|\gamma-1|~ p_0^{\min\left\{1, \frac{2}{\gamma}\right\}} =: L.
\end{equation}
\end{proof}
\end{lemma}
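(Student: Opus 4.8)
The goal is to compute a uniform Lipschitz constant $L$ for $g'_{\bm{\theta}}(\cdot, p_0)$ on $[1,k_{\max}]$, which is equivalent to bounding $\sup_k |g''_{\bm{\theta}}(k,p_0)|$ since $g_{\bm{\theta}}$ is $C^2$ in $k$. The plan is to differentiate $g_{\bm{\theta}}(k,p_0)=\frac{c}{k}+p(k;p_0,\bm{\theta})-1$ twice in $k$ and bound each resulting term separately, using the triangle inequality to split the second derivative into a curvature contribution from the $c/k$ term and a contribution from $p''(k;p_0,\bm{\theta})$.

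First I would handle the easy summand. Differentiating $c/k$ twice gives $2c/k^3$, and since $k\ge 1$ on the domain, this is maximized at $k=1$, yielding the bound $2c$. Second, I would compute $p''(k;p_0,\bm{\theta})$ for the polynomial model $p(k;p_0,\bm{\theta})=p_0(1+\alpha(k-1))^\gamma$: a direct calculation gives $p''=p_0\alpha^2\gamma(\gamma-1)(1+\alpha(k-1))^{\gamma-2}$, so $|p''|=\alpha^2\gamma|\gamma-1|\,p_0\,(1+\alpha(k-1))^{\gamma-2}$. The remaining task is to bound the factor $p_0(1+\alpha(k-1))^{\gamma-2}$ over $k\in[1,k_{\max}]$.

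The main obstacle is this last maximization, because the monotonicity of $(1+\alpha(k-1))^{\gamma-2}$ in $k$ changes sign depending on whether $\gamma<2$ or $\gamma\ge 2$. When $0\le\gamma<2$ the exponent $\gamma-2$ is negative, so the base-power is decreasing in $k$ and the maximum is at $k=1$, where $(1+\alpha\cdot 0)^{\gamma-2}=1$; this gives $\alpha^2\gamma|\gamma-1|p_0$. When $\gamma\ge 2$ the exponent is nonnegative, so the power is increasing and the maximum is at $k=k_{\max}$; here I would substitute the defining relation $p(k_{\max})=p_0(1+\alpha(k_{\max}-1))^\gamma=1$, i.e.\ $(1+\alpha(k_{\max}-1))^\gamma=p_0^{-1}$, which gives $(1+\alpha(k_{\max}-1))^{\gamma-2}=p_0^{-(\gamma-2)/\gamma}=p_0^{-1+2/\gamma}$, so the factor $p_0(1+\alpha(k_{\max}-1))^{\gamma-2}=p_0^{2/\gamma}$. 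This uses the stated expression $k_{\max}=1+\alpha^{-1}(p_0^{-1/\gamma}-1)$ to substitute cleanly.

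Finally I would unify the two cases by observing that both bounds are captured by writing the exponent on $p_0$ as $\min\{1,2/\gamma\}$: for $\gamma<2$ we have $2/\gamma>1$ so $\min=1$ and $p_0\le p_0^{\min\{1,2/\gamma\}}$ (since $p_0\le 1$), while for $\gamma\ge 2$ we get $\min=2/\gamma$ matching $p_0^{2/\gamma}$ exactly. Combining with the $2c$ term and recalling $|g''_{\bm{\theta}}|\le 2c+\sup_k|p''|$ yields
\[
    L = 2c + \alpha^2\gamma|\gamma-1|\,p_0^{\min\{1,2/\gamma\}},
\]
which is the claimed constant. I would keep in mind the harmless edge case $\gamma=1$, where $p''\equiv 0$ and the formula correctly collapses to $L=2c$, as a consistency check that the absolute value $|\gamma-1|$ is the right factor.
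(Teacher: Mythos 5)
Your proposal is correct and follows essentially the same route as the paper's proof: split $|g''_{\bm{\theta}}|$ via the triangle inequality into the $2c/k^3$ term (maximized at $k=1$) and $|p''|$, then case on $\gamma<2$ versus $\gamma\ge 2$ using the monotonicity of $(1+\alpha(k-1))^{\gamma-2}$ and the relation $p(k_{\max})=1$ to obtain the unified exponent $\min\{1,2/\gamma\}$. Your explicit computation of $p''$ and the substitution $(1+\alpha(k_{\max}-1))^{\gamma}=p_0^{-1}$ merely spell out steps the paper leaves implicit, and the $\gamma=1$ sanity check is a nice (if unnecessary) addition.
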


\section{Proof of Convergence of Algorithm~\ref{algo:line_search}}

\label{sec:proof_of_convg_line_search}
Theorem~\ref{thm:line_search_converge} proves convergence of Algorithm~\ref{algo:line_search}. Required lemmas are in Section~\ref{apndx:lemmas_perturb}.

\begin{theorem}
\label{thm:line_search_converge}
\textit{Proof of convergence of Algorithm~\ref{algo:line_search}}: Algorithm~\ref{algo:line_search} yields $\tilde{g}$, which is arbitrarily close to $g^* = {\min}_{k \in[1, k_{\max}]} g_{\bm{\theta}}(k, p_0)$, i.e., $|\tilde{g} - g^*| \leq \delta$, for a pre-determined $\delta>0$.
\end{theorem}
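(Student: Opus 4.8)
The plan is to show that \texttt{LineSearch} implicitly partitions the compact interval $[1,k_{\max}]$ into \emph{descent regions}, on which \texttt{ProjGD} is active, and \emph{near-flat regions}, which \texttt{Perturb} traverses, and then to argue that on each region type the global minimum is approximated to within $\delta$. First I would establish finite termination. The iterate $k_i$ moves only in the positive direction, and in every outer iteration that does not already terminate at $k_{\max}$ the algorithm makes forward progress of at least $\Delta k = \sqrt{2\delta/L} > 0$: either \texttt{ProjGD} advances $k$ while descending toward a stationary point, or, once \texttt{ProjGD} halts at an interior near-stationary point (where the probe change $|g_{\bm{\theta}}(k,p_0)-g_{\bm{\theta}}(k',p_0)|$ is small, so the \texttt{Perturb} while-condition holds), \texttt{Perturb} executes its body at least once and steps by $\Delta k$. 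Hence the iterate reaches $k_{\max}$ after at most $\lceil (k_{\max}-1)/\Delta k \rceil$ outer iterations and the algorithm halts.

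Next I would analyze the two regimes separately. On a descent region I would reuse the machinery behind Theorem~\ref{thm:proj_gd_converge}: the step size $\xi = 1/L$ prevents \texttt{ProjGD} from crossing a stationary point (Lemma~\ref{lemma:gd_nocross_stationary}), so the restriction of $g_{\bm{\theta}}(\cdot,p_0)$ to the current basin between consecutive stationary points is unimodal, and the gradient/convergence lemmas (Lemmas~\ref{lemma:stop_boundgrad} and~\ref{lemma:converge_bounded_grad}) applied with the relaxed stopping rule of Definition~\ref{def:gd_stop_criterion2} give that \texttt{ProjGD} returns a value within $\delta$ of that basin's local minimum. The crux is the \texttt{Perturb} regime. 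Here the stopping rule (Definition~\ref{def:prtrb_stop_criterion}) guarantees that, as long as \texttt{Perturb} keeps stepping, the change over the gradient-sized probe step $k'=k+\xi|g'_{\bm{\theta}}(k,p_0)|$ stays below $\delta$; combining this with the descent inequality induced by the $L$-Lipschitz gradient bounds the local slope throughout the flat region by roughly $|g'_{\bm{\theta}}(k,p_0)| \lesssim \sqrt{L\delta}$. I would then use this slope bound together with the calibrated step size $\Delta k = \sqrt{2\delta/L}$ to show, via the mean value theorem, that $g_{\bm{\theta}}$ cannot dip below its value at the sampled points $\{k, k+\Delta k, \ldots\}$ by more than a constant multiple of $\delta$ anywhere in the traversed sub-interval, so no hidden deep minimum is skipped.

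Finally I would combine the two regime bounds. Every point of $[1,k_{\max}]$, including the global minimizer $k^\ast$, lies either in a descent region—where \texttt{ProjGD} captures the local minimum to within $\delta$—or in a near-flat region—where the \texttt{Perturb} sampling captures the infimum over that stretch to within $\delta$. Since $\tilde{g}$ is updated to the running minimum over all \texttt{ProjGD} outputs and \texttt{Perturb} samples together with the two endpoints evaluated in line~\ref{algo:line_search_check_min_init}, it follows that $\tilde{g} - g^\ast \le \delta$, while $\tilde{g} \ge g^\ast$ holds trivially because every evaluated point lies in $[1,k_{\max}]$. The main obstacle is precisely the \texttt{Perturb} analysis: one must rule out a sharp, deep dip of $g_{\bm{\theta}}$ hiding strictly between two consecutive $\Delta k$-samples, and this is where the interplay between the \texttt{Perturb} threshold (which bounds the local slope) and the choice $\Delta k = \sqrt{2\delta/L}$ (which pins the curvature contribution to exactly $\delta$) must be made quantitative, drawing on the lemmas of Section~\ref{apndx:lemmas_perturb}.
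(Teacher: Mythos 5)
Your overall architecture---alternating \texttt{ProjGD} stretches and \texttt{Perturb}-sampled stretches, forward-only progress giving finite termination, and a running minimum $\tilde{g}$ over all evaluated points plus the two endpoints---is essentially the paper's proof. Two of your steps, however, do not go through as written. First, on the descent stretches you invoke Lemma~\ref{lemma:converge_bounded_grad}, whose proof rests on the supporting-line inequality $g(k) \geq g(k_j) + g'(k_j)(k-k_j)$; this requires convexity and fails on a merely unimodal basin, so you cannot conclude this way that \texttt{ProjGD} lands within $\delta$ of the basin's local minimum. The paper sidesteps the issue entirely: by Lemma~\ref{lemma:gd_nocross_stationary} there is no stationary point strictly between $k_i$ and the halting point $k^-_i$, so $g_{\bm{\theta}}(\cdot,p_0)$ is strictly monotone on $[k_i,k^-_i]$ and its minimum over that stretch is attained at an endpoint, both of which are evaluated---no error is incurred on descent stretches at all. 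The basin bottom, being a stationary point lying ahead of $k^-_i$, is charged to the subsequent \texttt{Perturb} segment instead, so no basin-local convergence claim for \texttt{ProjGD} is needed.

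Second, and more substantively, your \texttt{Perturb} analysis---a slope bound of order $\sqrt{L\delta}$ extracted from the while-condition, then the mean value theorem across a step $\Delta k = \sqrt{2\delta/L}$---yields only a dip bound that is a constant multiple of $\delta$ (linear term of order $\sqrt{L\delta}\cdot\Delta k$ plus curvature term $\tfrac{L}{2}\Delta k^2$), which you concede; but the theorem asserts $|\tilde{g}-g^*|\leq\delta$ with constant $1$. The paper's Lemma~\ref{lemma:max_perturb} gets the sharp constant by expanding the descent lemma around the interior stationary point $k_s$ itself, where $g'(k_s)=0$ annihilates the linear term: for the nearest sample $k_j$ one gets
\begin{equation}
|g(k_j)-g(k_s)| \leq \tfrac{L}{2}\,|k_j-k_s|^2 \leq \tfrac{L}{2}\,\Delta k^2 = \delta,
\end{equation}
and since the minimum over any traversed subinterval is attained either at a sampled endpoint or at an interior stationary point, this single inequality delivers $\tilde{g}-g^*\leq\delta$ directly, with no slope bound required. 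Both gaps are repairable---the second by re-centering the expansion at the stationary point, the first by replacing the convexity lemma with the monotonicity argument---and your termination count $\lceil (k_{\max}-1)/\Delta k \rceil$ is consistent with the paper's appeal to Corollary~\ref{col:min_delk_nonconvex} and Lemma~\ref{lemma:finite_stationary}.
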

\begin{proof}
Suppose $\{\ldots,k_i,k^-_i,k_{i+1},k^-_{i+1},\ldots\}$ is the sequence generated by Algorithm~\ref{algo:line_search}. From Lemma~\ref{lemma:gd_nocross_stationary} there are no stationary points in $(k_i, k^-_i)$. Then, the \texttt{Perturb} routine keeps track of the minimum value of $g_{\bm{\theta}}(\cdot, p_0)$ in $[k^-_i, k_{i+1}]$ at discrete increments: $\hat{g}_i = {\min}_{k \in \{k^-_i, k^-_i+\Delta k, \ldots, k_{i+1}\}} g_{\bm{\theta}}(k, p_0)$. This is followed by executing \texttt{ProjGD} again from $k_{i+1}$ to $k^{-}_{i+1}$, and so on. In every call to the \texttt{Perturb} routine, $\tilde{g}$ tracks the minimum of $\hat{g}_i$ until the $i$th iteration. From Lemma~\ref{lemma:max_perturb}, $\hat{g}_i$ differs from ${\min}_{k \in [k^-_i, k_{i+1}]} g_{\bm{\theta}}(k, p_0)$ by at most $\delta$. In line~\ref{algo:line_search_check_min_init} of Algorithm~\ref{algo:line_search}, $\tilde{g}$ is initialized with minimum at boundary points $k=\{1,k_{\max}\}$. Therefore, $\tilde{g} - g^* \leq \delta$. Finally, from Corollary~\ref{col:min_delk_nonconvex}, Lemma~\ref{lemma:max_perturb} and Lemma~\ref{lemma:finite_stationary}, Algorithm 3 terminates in finite steps when $k_{j} = k_{\max}$ or $k^-_{j} = k_{\max}$ for some $j \geq i+1$.
\end{proof}

\section{Lemmas for Convergence of Algorithms}

\subsection{Restriction of the feasible set of $P_L$ to $[1,k_{\max}]$}
\label{apndx:k_max}

Let $k_{\max} = \max\{k \mid p(k;p_0, \bm{\theta}) \leq 1\}$. If $k_{\max} = \infty$, then solving \eqref{eqn:bilevel} yields $p_{th}(\bm{\theta}, \epsilon) = 1$. Therefore, \eqref{eqn:bilevel} is non-trivial only if $k_{\max}$ is finite. Let $g_1(p_0) = {\min}_{k \geq 1}~g_{\bm{\theta}}(k, p_0)$ and $g_2(p_0) = {\min}_{k \in [1, k_{\max}]}~g_{\bm{\theta}}(k, p_0)$. From \eqref{eqn:erasure_g}, it can be observed that $g_1(p_0) = 0$ whenever $g_2(p_0) > 0$, and $g_1(p_0) = g_2(p_0)$ whenever $g_2(p_0) \leq 0$. Hence, the threshold $p_{th}(\bm{\theta}, \epsilon)$ obtained using $g_1(\cdot)$ and $g_2(\cdot)$ as a solution to $P_L$ in \eqref{eqn:bilevel} are identical. Therefore, \eqref{eqn:bilevel} can be equivalently solved by restricting the domain of $g_{\bm{\theta}}(\cdot, p_0)$ in $P_L$ to $[1, k_{\max}]$. In other words, one can replace line \ref{algo:PL} with $g^*_{\bm{\theta}}(p_0) = \min_{k \in [1,k_{\max}]} g_{\bm{\theta}}(k,p_0)$ to obtain the same value of threshold $p_{th}(\bm{\theta}, \epsilon)$. Additionally, this restriction makes the feasible set compact. Moreover, notice that the restriction and equivalence hold for all channels (not just erasure) as long as $\chi(\mathcal{N}_{p(k)}) = 0$ whenever $p(k) = 1$.

\subsection{Derivation of Lipschitz Constants for Symmetric GADC and Depolarizing channel}

\subsubsection{Symmetric GADC}
\label{sec:lipschitz_const_gadc}

\begin{lemma}
\label{lemma:lipschitz_const_gadc}
\textit{\textit{Computing Lipschitz constant $L$}}: $g'_{\bm{\theta}}(k,p_0)$ is $L$-Lipschitz over $[1, k_{\max}]$ for a polynomial scale-dependent symmetric GAD noise $p(k;p_0, \bm{\theta}) = p_0 (1+\alpha (k-1))^\gamma$, where
\begin{equation}
    \label{eqn:lipschitz_const_gadc}
    L := 2c + \frac{\alpha^2 \gamma}{2\ln{2}}\left( |\gamma-1| + \frac{\gamma}{3}\right).
\end{equation}

\end{lemma}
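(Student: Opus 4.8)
The plan is to bound the second derivative $g''_{\bm{\theta}}(k,p_0) := \frac{\partial^2}{\partial k^2} g_{\bm{\theta}}(k,p_0)$ uniformly over $[1,k_{\max}]$, since $L$-Lipschitzness of $g'_{\bm{\theta}}$ is equivalent to $|g''_{\bm{\theta}}| \le L$; this follows the structure of the erasure case in Lemma~\ref{lemma:lipschitz_const_erasure}. Writing $F(p) := h_2\left(\frac{1-\sqrt{1-p}}{2}\right)$ so that from \eqref{eqn:gadc_g} we have $g_{\bm{\theta}}(k,p_0) = \frac{c}{k} - 1 + F(p(k))$, the chain rule gives
\begin{equation}
g''_{\bm{\theta}}(k,p_0) = \frac{2c}{k^3} + F''(p(k))\,(p'(k))^2 + F'(p(k))\,p''(k).
\end{equation}
I would then bound the three summands separately by the triangle inequality. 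The first is maximized at $k=1$ and contributes the $2c$ term.

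For the remaining two terms I would reparametrize the derivatives of the noise. With $m := 1+\alpha(k-1) \ge 1$ and $p = p_0 m^{\gamma}$, one has $(p'(k))^2 = \gamma^2\alpha^2\, p^2 m^{-2}$ and $|p''(k)| = \gamma|\gamma-1|\alpha^2\, p\, m^{-2}$; since $m \ge 1$ on $[1,k_{\max}]$ the factor $m^{-2} \le 1$ can be dropped. This reduces everything to two scalar bounds in the single variable $p \in [0,1]$:
\begin{equation}
\sup_{p\in[0,1]} p^2\,|F''(p)| \le \frac{1}{6\ln 2}, \qquad \sup_{p\in[0,1]} p\,|F'(p)| \le \frac{1}{2\ln 2}.
\end{equation}
Granting these, $|F''(p)|(p'(k))^2 \le \frac{\gamma^2\alpha^2}{6\ln 2}$ and $|F'(p)||p''(k)| \le \frac{\gamma|\gamma-1|\alpha^2}{2\ln 2}$, and summing with $2c$ yields exactly $L = 2c + \frac{\alpha^2\gamma}{2\ln 2}\bigl(|\gamma-1| + \frac{\gamma}{3}\bigr)$.

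To prove the two scalar bounds I would substitute $u := \sqrt{1-p}$, so that $q = \frac{1-u}{2}$, $1-q = \frac{1+u}{2}$, and crucially $q(1-q) = p/4$. This collapses $h_2''(q) = -\frac{1}{\ln 2\, q(1-q)}$ to $-\frac{4}{p\ln 2}$ and yields the closed forms $F'(p) = \frac{1}{4u\ln 2}\ln\frac{1+u}{1-u}$ and $F''(p) = \frac{1}{\ln 2}\bigl(\frac{1}{8u^3}\ln\frac{1+u}{1-u} - \frac{1}{4u^2 p}\bigr)$, with $F' > 0$ and $F'' < 0$ so the absolute values are handled by sign. Both $F'$ and $F''$ are individually singular as $p\to 1$ (i.e. $u\to 0$), but the weights $p$ and $p^2$ tame them: the expansion $\ln\frac{1+u}{1-u} = 2u + \frac{2}{3}u^3 + \cdots$ shows the $u^{-2}$ singularities in $p^2 F''(p)$ cancel, leaving the finite limit $-\frac{1}{6\ln 2}$ at $p=1$ and the value $0$ at $p=0$; similarly $p\,F'(p) \to \frac{1}{2\ln 2}$ at $p=1$ and $0$ at $p=0$.

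The main obstacle is precisely this last step: establishing that the suprema of $p^2|F''(p)|$ and $p\,|F'(p)|$ over the entire interval $[0,1]$ are attained at the endpoint $p=1$ and equal $\frac{1}{6\ln 2}$ and $\frac{1}{2\ln 2}$, i.e.\ ruling out larger interior values. The cancellation of the $u^{-2}$ terms makes the endpoint limits elementary, but the uniform bound requires a monotonicity argument (equivalently, that the corresponding functions of $u$ have no interior maximum on $(0,1)$). I would handle this by clearing denominators in the derivative of each weighted function and reducing to an elementary sign inequality in $u$, paralleling the endpoint-maximization already used for the erasure model.
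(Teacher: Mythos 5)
Your proposal is correct and takes essentially the same route as the paper's proof: the same chain-rule decomposition of $g''_{\bm{\theta}}$ into $2c/k^3$, $F''(p)\,(p')^2$, and $F'(p)\,p''$, the same dropping of the $m^{-2}\le 1$ factors, and the same two endpoint bounds $\sup_p p^2|F''(p)| = \frac{1}{6\ln 2}$ and $\sup_p p\,|F'(p)| = \frac{1}{2\ln 2}$ attained as $p\to 1$ — indeed the paper's opaque expression $\frac{p/(1-p)}{4\ln 2}\ln\bigl(\frac{1}{e}(\frac{1-q}{q})^{p/(2\sqrt{1-p})}\bigr)$ is exactly your $p^2|F''(p)|$ in disguise, and the paper merely asserts the maximization at $k=k_{\max}$ that you flag as the main obstacle (your clearing-denominators plan does close it: both suprema reduce, after series expansion, to $\frac{2u}{1-u^2}-\ln\frac{1+u}{1-u}\le\frac{4u^3}{3(1-u^2)^2}$, which holds termwise since $\frac{k}{2k+1}\le\frac{k}{3}$ for $k\ge 1$, so your argument is if anything more complete than the paper's). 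One minor slip in your prose: $F'$ and $F''$ are in fact finite as $p\to 1$, with limits $\frac{1}{2\ln 2}$ and $-\frac{1}{6\ln 2}$ (the apparent $u^{-2}$ divergences cancel between the two constituent terms of $F''$ itself), and it is at $p\to 0$ that $F'$ and $F''$ genuinely diverge and your weights $p$ and $p^2$ are needed — a harmless mislabeling that does not affect any of the computed limits or the final constant.
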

\begin{proof}

Denote $q(p) = q(p(k)) = \frac{1-\sqrt{1-p(k)}}{2}$, and $p = p(k;p_0, \bm{\theta})$ (for brevity); the magnitude of the second-order derivative of $g_{\bm{\theta}}(k, p_0)$ is bounded above as:

\begin{equation}
\begin{split}
    |g''_{\bm{\theta}}| &\leq \underset{k}{\sup} ~\left|\frac{2c}{k^3}\right| + \underset{k}{\sup}  ~|h''_2\left(q(k)\right)|, \nonumber \\
    &\leq 2c + \underset{k}{\sup}~\{ |h_2''(q) q'(p)^2 p'^2 \nonumber \\ 
    + & ~h_2'(q) q''(p) p'^2| \} + \underset{k}{\sup}~\{|h_2'(q) q'(p) p''|\}
\end{split}
\end{equation}
The last two terms on the right-hand side of the inequality are maximized when $k=k_{\max}$. The second term and the third term are bounded above as:
\begin{equation}
\begin{split}
    |&h_2''(q)q'(p)^2 p'^2 + h_2'(q) q''(p) p'^2| \\ \nonumber
    & \leq \alpha^2 \gamma^2 \left| \frac{p/(1-p)}{4 \ln 2} \ln\left( \frac{1}{e} \left(\frac{1-q(p)}{q(p)}\right)^{\frac{p}{2\sqrt{1-p}}}\right) \right| \leq \frac{\alpha^2 \gamma^2}{6 \ln 2},
\end{split}
\end{equation}
and
\begin{equation}
    h_2'(q) q'(p) p'' \leq \frac{\alpha^2 \gamma |\gamma-1|}{4 \sqrt{1-p}} \log_2\left(\frac{1-q(p)}{q(p)}\right) \leq \frac{\alpha^2 \gamma |\gamma-1|}{2 \ln{2}},
\end{equation}
respectively. The last inequalities in both the terms are obtained by allowing $p \rightarrow 1$ (i.e., $k \rightarrow k_{max}$). Therefore,
\begin{equation}   
\begin{split}
    |g''_{\bm{\theta}}| \leq 2c + \frac{\alpha^2 \gamma}{2 \ln{2}}\left( |\gamma-1| + \frac{\gamma}{3 }\right) =: L.
\end{split}
\end{equation}

\end{proof}

\subsubsection{Depolarizing Channel}
\label{sec:lipschitz_const_depol}

\begin{lemma}
\label{lemma:lipschitz_const_depol}
{\textit{Computing Lipschitz constant $L$}: $g'_{\bm{\theta}}(k,p_0)$ is $L$-Lipschitz over $[1, k_{\max}]$ for a polynomial scale-dependent depolarizing noise $p(k;p_0, \bm{\theta}) = p_0 (1+\alpha (k-1))^\gamma$, where}
\begin{equation}
    \label{eqn:lipschitz_const_depol}
    L = 2c + \frac{\alpha^2 \gamma}{2 \ln 2} \left( \frac{2 \gamma}{p_0 (2-p_0)} + |\gamma - 1| \ln\left( \frac{2-p_0}{p_0} \right) \right).
\end{equation}

\end{lemma}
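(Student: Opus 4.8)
The plan is to bound the Lipschitz constant $L$ of $g'_{\bm{\theta}}(\cdot,p_0)$ over $[1,k_{\max}]$ by bounding the second derivative $g''_{\bm{\theta}}(k,p_0)$ in absolute value, exactly as was done for the erasure (Lemma~\ref{lemma:lipschitz_const_erasure}) and symmetric GADC (Lemma~\ref{lemma:lipschitz_const_gadc}) cases. Recall from \eqref{eqn:depol_g} that $g_{\bm{\theta}}(k,p_0) = \frac{c}{k} - 1 + h_2\!\left(\frac{p(k)}{2}\right)$ with $p(k) = p_0(1+\alpha(k-1))^\gamma$. Differentiating twice in $k$ and applying the triangle inequality gives
\begin{equation}
|g''_{\bm{\theta}}| \leq \sup_k \left|\frac{2c}{k^3}\right| + \sup_k \left| \frac{d^2}{dk^2} h_2\!\left(\tfrac{p(k)}{2}\right)\right|,
\end{equation}
where the first summand is maximized at $k=1$, yielding $2c$, which accounts for the leading term in \eqref{eqn:lipschitz_const_depol}.

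The core work is the second summand. Writing $q = p/2$ and using the chain rule, $\frac{d^2}{dk^2}h_2(q) = h_2''(q)\,(q')^2 + h_2'(q)\,q''$, where primes on $q$ denote $k$-derivatives. Since $q = \frac{1}{2}p_0(1+\alpha(k-1))^\gamma$, I have $q' = \tfrac{1}{2}p'$ and $q'' = \tfrac{1}{2}p''$ with $p' = p_0\alpha\gamma(1+\alpha(k-1))^{\gamma-1}$ and $p'' = p_0\alpha^2\gamma(\gamma-1)(1+\alpha(k-1))^{\gamma-2}$. Using $h_2'(q) = \log_2\!\left(\frac{1-q}{q}\right)$ and $h_2''(q) = -\frac{1}{\ln 2}\,\frac{1}{q(1-q)}$, the first piece contributes $\frac{(q')^2}{q(1-q)\ln 2} = \frac{(p')^2}{p(2-p)\ln 2}$, and the second piece contributes $h_2'(q)\,q'' = \tfrac{1}{2}p''\log_2\!\left(\frac{2-p}{p}\right)$. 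The plan is to bound each of these two terms separately by substituting the expressions for $p'$ and $p''$ and then maximizing over $k\in[1,k_{\max}]$.

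I expect the main obstacle to be identifying where each term attains its maximum and obtaining the clean closed forms $\frac{2\gamma}{p_0(2-p_0)}$ and $|\gamma-1|\ln\!\left(\frac{2-p_0}{p_0}\right)$ that appear in \eqref{eqn:lipschitz_const_depol}. The key observation I would use is that both terms are extremized at the boundary $k=1$ (where $p(k)=p_0$), in contrast to the GADC case where the relevant bound came from letting $p\to 1$. For the first term, after substituting $p' = p_0\alpha\gamma$ at $k=1$, the factor $\frac{(p')^2}{p(2-p)}$ becomes $\frac{p_0^2\alpha^2\gamma^2}{p_0(2-p_0)} = \frac{p_0\alpha^2\gamma^2}{2-p_0}$; I would then need to argue that the growth of $(1+\alpha(k-1))$ in the numerator is dominated by the corresponding growth in the $p(2-p)$ denominator so that $k=1$ is indeed the maximizer, giving the bound $\frac{\alpha^2\gamma^2}{\ln 2}\cdot\frac{1}{p_0(2-p_0)}$ matching the stated term up to the $\frac{1}{2\ln 2}$ factoring. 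For the second term, substituting $p''$ at $k=1$ gives $\tfrac{1}{2}p_0\alpha^2\gamma|\gamma-1|\log_2\!\left(\frac{2-p_0}{p_0}\right)$, and I would similarly verify monotonicity to confirm $k=1$ maximizes it. Collecting both contributions and factoring out $\frac{\alpha^2\gamma}{2\ln 2}$ then yields exactly \eqref{eqn:lipschitz_const_depol}; the delicate part is the monotonicity verification rather than the algebra itself.
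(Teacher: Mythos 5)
Your setup (bounding $L$ via $\sup_k |g''_{\bm{\theta}}|$, the chain-rule split $\frac{d^2}{dk^2}h_2(p/2)=\frac{1}{4}h_2''(z)(p')^2+\frac{1}{2}h_2'(z)\,p''$ with $z=p/2$, and the correct formulas for $h_2'$, $h_2''$, $p'$, $p''$) matches the paper's proof, but your ``key observation'' --- that both summands are maximized at $k=1$ --- is false in general, and it is also inconsistent with the constant you are trying to reach. Writing $u=1+\alpha(k-1)\in[1,p_0^{-1/\gamma}]$, the first piece is
\begin{equation}
\frac{(p')^2}{p(2-p)\ln 2}=\frac{p_0\,\alpha^2\gamma^2\,u^{\gamma-2}}{\left(2-p_0u^{\gamma}\right)\ln 2},
\end{equation}
which for $\gamma=2$ equals $\frac{4p_0\alpha^2}{(2-p_0u^2)\ln 2}$ and is strictly increasing in $k$, so its maximizer is $k_{\max}$, not $k=1$; similarly, for $\gamma>2$ the second piece $\frac{1}{2}|p''|\log_2\frac{2-p}{p}$ can peak at an interior point, since the logarithmic factor vanishes at $k_{\max}$ while $u^{\gamma-2}$ grows. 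Moreover, even in the regimes where $k=1$ is the maximizer, your own evaluations at $k=1$ give $\frac{p_0\alpha^2\gamma^2}{(2-p_0)\ln 2}$ and $\frac{1}{2}p_0\alpha^2\gamma|\gamma-1|\log_2\frac{2-p_0}{p_0}$, which differ from the stated terms $\frac{\alpha^2\gamma^2}{p_0(2-p_0)\ln 2}$ and $\frac{1}{2}\alpha^2\gamma|\gamma-1|\log_2\frac{2-p_0}{p_0}$ by factors of $p_0^2$ and $p_0$, respectively --- this is not a ``$\frac{1}{2\ln 2}$ factoring'' discrepancy, and collecting your two contributions would not yield \eqref{eqn:lipschitz_const_depol}.

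The paper avoids the monotonicity question entirely by decoupling the suprema: it uses $\sup_k\{|h_2''(z)|(p')^2\}\leq(\sup_k|h_2''(z)|)(\sup_k(p')^2)$, and likewise for the second piece. The entropy factors $|h_2''(z)|$ and $h_2'(z)$ are maximized at $p=p_0$ (i.e., $k=1$, because $z\mapsto z(1-z)$ is increasing on $[0,\frac{1}{2}]$), giving $\frac{4}{p_0(2-p_0)\ln 2}$ and $\log_2\frac{2-p_0}{p_0}$, while the noise derivatives satisfy $|p'|\leq\alpha\gamma$ and $|p''|\leq\alpha^2\gamma|\gamma-1|$ uniformly on $[1,k_{\max}]$, attained at $k=1$ or at $k_{\max}$ according as $\gamma$ lies below or above $1$ (respectively $2$), using $p(k_{\max})=1$ so that $u_{\max}^{\gamma}=p_0^{-1}$. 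Multiplying these separate suprema produces exactly the two terms in \eqref{eqn:lipschitz_const_depol}. If you insisted on your tighter joint maximization, you would need a case analysis in $\gamma$ and would prove a smaller, $\gamma$-dependent constant than the lemma states; to prove the lemma as written, the decoupled bound is the intended and sufficient argument, and your monotonicity step as proposed would fail.
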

\begin{proof}

The second order derivative of $g_{\bm{\theta}}(k, p_0)$ is bounded above as:
\begin{equation}
\begin{split}
    |g''_{\bm{\theta}}| &\leq \underset{k}{\sup} \left|\frac{2c}{k^3}\right| + \underset{k}{\sup} \left| h_2''(p(k)/2) \right|, \\
    |g''_{\bm{\theta}}| &\leq 2c + \underset{k}{\sup} \left\{\frac{1}{4} |h_2''(z)| p'(k;p_0, \bm{\theta})^2 + \frac{1}{2} h'(z) p''(k;p_0, \bm{\theta}) \right\},
\end{split}
\end{equation}
where $z = p(k;p_0, \bm{\theta})/2$. Noting that the $|h_2''(z)|$ and $h'(z)$ are maximized when $p(k;p_0, \bm{\theta}) = p_0$, we obtain

\begin{equation}
\begin{split}
    |g''_{\bm{\theta}}| \leq 2c + \frac{\alpha^2 \gamma}{2 \ln 2} & \left( \frac{2 \gamma }{p_0 (2-p_0)} \right. \\
    & \left. + ~ |\gamma-1|~\ln \left(\frac{2-p_0}{p_0}\right)  \right) =: L.
\end{split}
\end{equation}
\end{proof}

\subsection{Lemmas: projected gradient descent}

\label{apndx:lemmas_grad_desc}

\begin{definition}
\label{def:g_4_lemmas}
In Appendices~\ref{apndx:lemmas_grad_desc} and \ref{apndx:lemmas_perturb}, we consider $g(\cdot)$ to be of the following form: $g:[1,k_{\max}] \rightarrow \mathbbm{R}:k \mapsto g(k)$, where $g'(\cdot)$ is $L$-Lipschitz.
\end{definition}

\begin{lemma}
\label{lemma:stop_boundgrad}
\textit{Stopping criterion and bounded gradient:} Suppose a pair of iterates $(k_j, k_{j+1})$, which lie in the interior $(1,k_{\max})$, generated by \texttt{ProjGD} satisfy the stopping criterion  $|g(k_j) - g(k_{j+1})|< \frac{\delta^2}{2L k^2_{\max}}$, then the first order derivative is bounded above as $|g'(k_j)| < \frac{\delta}{k_{\max}}$.
\end{lemma}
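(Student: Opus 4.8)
The plan is to exploit the explicit form of the \texttt{ProjGD} update in Algorithm~\ref{algo:proj_gd} together with the $L$-Lipschitzness of $g'$ assumed in Definition~\ref{def:g_4_lemmas}. The key preliminary observation is that, since the hypothesis places both $k_j$ and $k_{j+1}$ strictly inside $(1,k_{\max})$, neither the lower clip at $1$ nor the upper clip at $k_{\max}$ in the update $k_{j+1} \leftarrow \min\{k_{\max}, \max\{1, k_j + \xi d_j\}\}$ is active. With step size $\xi = 1/L$ and $d_j = |g'(k_j)|$, this reduces the update to the exact relation $k_{j+1} - k_j = |g'(k_j)|/L \ge 0$; I denote this step length by $s$.

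First I would apply the standard smoothness (descent) inequality, valid whenever $g'$ is $L$-Lipschitz, namely $|g(k_{j+1}) - g(k_j) - g'(k_j)(k_{j+1}-k_j)| \le \tfrac{L}{2}(k_{j+1}-k_j)^2$. Writing $g(k_{j+1}) - g(k_j) = g'(k_j)\,s + r$ with $|r| \le \tfrac{L}{2}s^2$ and applying the reverse triangle inequality yields the lower bound $|g(k_{j+1}) - g(k_j)| \ge |g'(k_j)|\,s - \tfrac{L}{2}s^2$. This is the step requiring the most care: the update always moves in the $+k$ direction, so the signed term $g'(k_j)(k_{j+1}-k_j)$ may be positive or negative, and it is essential to bound the \emph{magnitude} of the function-value gap from below by $|g'(k_j)|\,s$ minus the quadratic remainder, rather than manipulating signed quantities.

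Next I would substitute $s = |g'(k_j)|/L$ into this bound. The linear term contributes $|g'(k_j)|\cdot|g'(k_j)|/L = |g'(k_j)|^2/L$, while the quadratic remainder contributes $\tfrac{L}{2}\cdot|g'(k_j)|^2/L^2 = |g'(k_j)|^2/(2L)$, so the two combine into the clean inequality $|g(k_j) - g(k_{j+1})| \ge |g'(k_j)|^2/(2L)$. Finally I would invoke the stopping-criterion hypothesis $|g(k_j) - g(k_{j+1})| < \delta^2/(2L k_{\max}^2)$ and chain the two inequalities to get $|g'(k_j)|^2/(2L) < \delta^2/(2L k_{\max}^2)$; cancelling the common factor $1/(2L)$ and taking square roots gives $|g'(k_j)| < \delta/k_{\max}$, as claimed.

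The main obstacle is not computational but one of careful bookkeeping: establishing that the projection is inactive so that the step length equals $|g'(k_j)|/L$ \emph{exactly}, and correctly extracting a lower bound on the magnitude of the function-value gap from the signed smoothness inequality. Once the step length is pinned down, the remainder is the standard sufficient-ascent/descent algebra, and no further assumptions beyond $L$-Lipschitzness of $g'$ and the interior condition are needed.
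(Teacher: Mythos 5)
Your proof is correct and follows essentially the same route as the paper's: apply the descent (smoothness) lemma to the pair $(k_j,k_{j+1})$, use the exact step length $\xi\,|g'(k_j)|$ with $\xi = 1/L$ to obtain $\frac{1}{2L}|g'(k_j)|^2 \le |g(k_j)-g(k_{j+1})|$, and chain this with the stopping criterion. Your only deviation --- using the two-sided smoothness bound and the reverse triangle inequality so that the argument is insensitive to the sign of $g'(k_j)$ --- is a mild refinement of the paper's substitution $k_{j+1}-k_j = -\xi g'(k_j)$, which tacitly assumes the descent direction.
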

\begin{proof}
    Applying the descent lemma to $k_j, k_{j+1}$, we get
\begin{equation}
\begin{split}
    g(k_{j+1},p_0) \leq g(k_{j},p_0) & \\
    +  g'(k_j,p_0) (k_{j+1} & - k_j) + \frac{1}{2} L |k_{j+1} - k_j|^2. \label{eqn1:convg_bound_grad} 
\end{split}
\end{equation}
Substituting  $k_{j+1} - k_j = - \xi g'_{\bm{\theta}}(k_j)$ in \eqref{eqn1:convg_bound_grad}:
\begin{equation}
    \xi \left(1 - \frac{\xi L}{2}\right) |g'(k_j)|^2 \leq g(k_{j}) - g(k_{j+1}).
\end{equation}
Choosing $\xi = \frac{1}{L}$, we obtain:
\begin{equation}
    \frac{1}{2L}|g'(k_j)|^2 \leq g(k_j) - g(k_{j+1}) < \frac{\delta^2}{2 L k^2_{\max}}.
\end{equation}
Therefore,
\begin{equation}
    |g'(k_j)| < \frac{\delta}{k_{\max}}.
\end{equation}
\end{proof}

\begin{lemma}
\label{lemma:converge_bounded_grad}
 Suppose $g(\cdot)$ is convex, and $g^* = {\min}_{k \in [1, k_{\max}]}~g(k)$. If $|g'(k_j)| < \delta/k_{\max}$, then $|g(k_j) - g^*| \leq \delta$, for any $k_j \in [1,k_{\max}]$.
\end{lemma}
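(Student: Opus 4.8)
The plan is to invoke the first-order characterization of convexity, which turns the small-gradient hypothesis directly into a bound on the optimality gap. First I would fix a minimizer $k^* \in [1,k_{\max}]$ of $g$; such a point exists because $g$ is continuous (indeed $g'$ is $L$-Lipschitz, so $g$ is $C^1$) and the domain $[1,k_{\max}]$ is compact. By definition $g^* = g(k^*)$, and since $k_j \in [1,k_{\max}]$ we trivially have $g(k_j) \geq g^*$, so the quantity $|g(k_j) - g^*|$ appearing in the claim is just $g(k_j) - g^*$.

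Next I would apply the gradient inequality for convex functions at the base point $k_j$, evaluated at $y = k^*$:
\begin{equation}
g^* = g(k^*) \geq g(k_j) + g'(k_j)\,(k^* - k_j).
\end{equation}
Rearranging this gives $g(k_j) - g^* \leq g'(k_j)\,(k_j - k^*) \leq |g'(k_j)|\,|k_j - k^*|$, where the last step is the elementary bound $ab \leq |a||b|$.

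The final step bounds the displacement by the diameter of the feasible interval: since both $k_j$ and $k^*$ lie in $[1,k_{\max}]$, we have $|k_j - k^*| \leq k_{\max} - 1 < k_{\max}$. Combining this with the hypothesis $|g'(k_j)| < \delta/k_{\max}$ yields
\begin{equation}
0 \leq g(k_j) - g^* < \frac{\delta}{k_{\max}}\cdot k_{\max} = \delta,
\end{equation}
which is exactly the assertion $|g(k_j) - g^*| \leq \delta$.

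I do not expect any genuine obstacle here; once the minimizer is fixed, the result is a one-line consequence of convexity. The only points deserving a word of care are the existence of $k^*$ (continuity plus compactness of $[1,k_{\max}]$, which also justifies writing the minimum as an attained value) and the deliberately loose diameter estimate $|k_j - k^*| \leq k_{\max}$, which is precisely what makes the factor $k_{\max}$ in the gradient hypothesis cancel and produce the clean bound $\delta$. This lemma is what converts the gradient-magnitude stopping guarantee of Lemma~\ref{lemma:stop_boundgrad} into the function-value accuracy $|\tilde{g} - g^*_{\bm{\theta}}(p_0)| \leq \delta$ used in Theorem~\ref{thm:proj_gd_converge}.
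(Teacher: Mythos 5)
Your proposal is correct and follows essentially the same route as the paper: both rest on the first-order convexity inequality $g(k) \geq g(k_j) + g'(k_j)(k-k_j)$ together with the diameter bound $|k_j - k^*| \leq k_{\max}-1 \leq k_{\max}$. The only difference is cosmetic — the paper splits into cases on the sign of $g'(k_j)$, whereas you evaluate the inequality directly at the minimizer $k^*$ and absorb the sign with $ab \leq |a|\,|b|$, a slightly cleaner packaging of the identical argument.
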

\begin{proof}
From the convexity of $g(\cdot)$, we have $g(k) \geq g'(k_j) (k - k_j) + g(k_j)$, for any $k, k_j \in [1,k_{\max}]$. If $g'(k_j) < 0$, then:
\begin{equation}
    g(k) - g(k_j) \geq g'(k_j) (k_{\max} - k_j) \geq g'(k_j) (k_{\max} - 1), \forall k. \nonumber
\end{equation}
Therefore,
\begin{equation}
    g(k_j) - g^* \leq |g'(k_j)| (k_{\max} - 1) \leq |g'(k_j)| k_{\max} \leq \delta.
\end{equation}
On the other hand, if $g'(k_j) \geq 0$, then
\begin{equation}
    g(k) - g(k_j) \geq g'(k_j) (k - k_j) \geq g'(k_j) (1 - k_j), \forall k. \nonumber
\end{equation}
\begin{equation}
    g(k_j) - g^* \leq g'(k_j) (k_{\max} - 1) \leq g'(k_j) k_{\max} \leq \delta.
\end{equation}
Therefore, combining both cases: if $|g'(k_j)| \leq \frac{\delta}{k_{\max}}$, then $|g(k_j) - g^*| \leq \delta$.
\end{proof}

\begin{lemma}
\label{lemma:gd_nocross_stationary}
\textit{Projected gradient descent (\texttt{ProjGD}) does not cross any stationary point:} Let $k_j$ and $k_{j+1}$ be the successive iterates generated by \texttt{ProjGD} routine for $g(\cdot)$. Suppose, the step-size $\xi \in (0, \frac{1}{L}]$, then $g'(k_j) g'(k_{j+1}) \geq 0$.
\end{lemma}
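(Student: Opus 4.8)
The plan is to show that a single \texttt{ProjGD} step cannot change the sign of the derivative, which is precisely the assertion $g'(k_j)\,g'(k_{j+1}) \ge 0$. First I would extract two structural facts from the update rule $k_{j+1} = \min\{k_{\max}, \max\{1, k_j + \xi|g'(k_j)|\}\}$. Because the increment $\xi|g'(k_j)|$ is non-negative and every iterate satisfies $k_j \ge 1$, the inner projection $\max\{1,\cdot\}$ is never active, so $k_{j+1} = \min\{k_{\max}, k_j + \xi|g'(k_j)|\}$. Consequently the iterate moves only rightward and by at most the unprojected amount, i.e.\ $0 \le k_{j+1} - k_j \le \xi|g'(k_j)|$: clipping at $k_{\max}$ can shorten the displacement but never reverses or lengthens it.

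Next I would combine this displacement bound with the $L$-Lipschitz continuity of $g'$ (Definition~\ref{def:g_4_lemmas}) and the step-size hypothesis $\xi \le 1/L$. This yields the single key estimate
\begin{equation}
    |g'(k_{j+1}) - g'(k_j)| \le L\,(k_{j+1} - k_j) \le L\xi\,|g'(k_j)| \le |g'(k_j)|,
\end{equation}
where the final inequality uses $L\xi \le 1$. Intuitively, this says that $g'(k_{j+1})$ remains within distance $|g'(k_j)|$ of $g'(k_j)$, and therefore cannot overshoot past zero onto the opposite half-line.

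Finally I would convert this estimate into the sign statement by a short triangle-inequality argument. If $g'(k_j)$ and $g'(k_{j+1})$ were of strictly opposite sign, then $|g'(k_{j+1}) - g'(k_j)| = |g'(k_{j+1})| + |g'(k_j)| > |g'(k_j)|$, contradicting the bound above; and if $g'(k_j) = 0$, the bound forces $g'(k_{j+1}) = 0$ as well. In every case $g'(k_j)\,g'(k_{j+1}) \ge 0$, which is the claim.

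The only genuinely delicate point I anticipate is the bookkeeping around the projection onto $[1, k_{\max}]$ — namely verifying that clipping the step at the boundary can only shorten the displacement, so that $k_{j+1} - k_j \le \xi|g'(k_j)|$ is preserved, and never moves the iterate leftward. Once that is in place, the Lipschitz estimate and the triangle-inequality sign argument are essentially immediate, so I expect the bulk of the work to be conceptual rather than computational.
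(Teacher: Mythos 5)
Your proposal is correct and follows essentially the same route as the paper: both rest on the single Lipschitz estimate $|g'(k_{j+1})-g'(k_j)| \le L|k_{j+1}-k_j| \le L\xi|g'(k_j)| \le |g'(k_j)|$, after which the paper unpacks this into the sign-preserving chain $(1-L\xi)g'(k_j) \le g'(k_{j+1}) \le (1+L\xi)g'(k_j)$ while you phrase the same conclusion as a triangle-inequality contradiction. If anything, your bookkeeping is slightly more careful than the paper's, which asserts the displacement equality $|k_j-k_{j+1}| = \xi|g'(k_j)|$ that fails when the iterate is clipped at $k_{\max}$, whereas your one-sided bound $0 \le k_{j+1}-k_j \le \xi|g'(k_j)|$ covers that boundary case without extra work.
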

\begin{proof}
    From the definition of Lipschitz gradient, we have $|g'(k_j) - g'(k_{j+1})| \leq L |k_j - k_{j+1}| = L \xi |g'(k_j)|$, where the last equality holds, since $k_{j+1}$ is generated from \texttt{ProjGD} routine. Suppose, $g'(k_j) \geq 0$, then the following inequalities hold:
    \begin{align*}
        -L\xi g'(k_j) &\leq g'(k_j) - g'(k_{j+1}) \leq L \xi g'(k_j),   \\
        -L\xi g'(k_j) &\leq g'(k_{j+1}) - g'(k_j) \leq L \xi g'(k_j), \\        
        (1-L\xi) g'(k_j) &\leq g'(k_{j+1}) \leq (1 + L \xi) g'(k_j).
    \end{align*}
    For $\xi \leq \frac{1}{L}$, we obtain:
    \begin{equation}
        g'(k_{j+1}) \geq g'(k_j) (1 - L \xi) \geq 0.
    \end{equation}
    Symmetrically, if $g'(k_{j}) \leq 0$, then $g'(k_{j+1}) \leq 0$. Combining both cases, we obtain $g'(k_j) g'(k_{j+1}) \geq 0$.
\end{proof}

\begin{lemma}
\label{lemma:projgd_min_delk}
\textit{Least difference between the successive iterates of \texttt{ProjGD}}: Let $k_j$ and $k_{j+1}$ be the successive iterates generated by \texttt{ProjGD} routine for $g(\cdot)$, with a step size $\xi = \frac{1}{L}$. If $|g(k_j) - g(k_{j+1})| > \zeta$, then $|k_j - k_{j+1}| > \sqrt{\frac{ 2\zeta}{5 L}}$.
\begin{proof}
    From the definition of Lipschitz gradient, we have
    \begin{equation}
        \begin{split}
            |g'(k_{j+1})-g'(k_{j})| &\leq L |k_{j+1} - k_j|, \\
            |g'(k_{j+1})| &\leq |g'(k_{j})| + L |k_{j+1} - k_j| \\
            & = \frac{|k_{j+1} - k_j|}{\xi} + L |k_{j+1} - k_j|.
        \end{split}
    \end{equation}
    Substituting $\xi = \frac{1}{L}$ we obtain
    \begin{equation}
        |g'(k_{j+1})| \leq 2L |k_{j} - k_{j+1}|. \label{eqn:bounded_derivative_kjplus1}
    \end{equation}
    Suppose $g(k_j) \geq g(k_{j+1})$, then using descent lemma \cite{bertsekas1997nonlinear} on $g(\cdot)$ at $k_j$ and $k_{j+1}$ yields
    \begin{align}
        g(k_j) - &g(k_{j+1}) \leq g'(k_{j+1}) (k_j - k_{j+1}) + \frac{L}{2} |k_j - k_{j+1}|^2 \label{eqn:descent_lemma1}\\
        & \leq |g'(k_{j+1})| |(k_j - k_{j+1})| + \frac{L}{2} |k_j - k_{j+1}|^2. \label{eqn:gkj_minus_gkjplus1_bound}
    \end{align}
    Substituting \eqref{eqn:bounded_derivative_kjplus1} in \eqref{eqn:gkj_minus_gkjplus1_bound}, and using $|g(k_j) - g(k_{j+1})| > \zeta$ we obtain
    \begin{equation}
        \zeta < |g(k_j) - g(k_{j+1})| \leq \frac{5 L}{2} |k_j - k_{j+1}|^2,
        \label{eqn:delk_interior}
    \end{equation}
    It can be also verified that \eqref{eqn:delk_interior} holds when $g(k_j) < g(k_{j+1})$. Therefore,
    \begin{equation}
        |k_{j+1} - k_j| > \begin{cases}
            \sqrt{\frac{ 2\zeta}{5 L}}, &  \mbox{if } k_{j+1} < k_{\max}, \\
            \sqrt{\frac{2}{5 L} (g(k_j) - g(k_{\max}))}, & \mbox{if } k_{j+1} = k_{\max}.
        \end{cases}
    \end{equation}    
    The second case is mentioned separately since \eqref{eqn:delk_interior} may not hold when $k_{j+1} = k_{\max}$.
\end{proof}
\end{lemma}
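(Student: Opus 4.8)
The plan is to control the step length $|k_j - k_{j+1}|$ from below by the drop (or rise) in objective value $|g(k_j) - g(k_{j+1})|$, using only the $L$-smoothness of $g$ together with the explicit form of the \texttt{ProjGD} update at step size $\xi = 1/L$. The whole argument is a one-dimensional application of the descent lemma, so no convexity is needed here.

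First I would extract an exact relation from the update rule. In the interior $(1,k_{\max})$ the projection is inactive, so $k_{j+1} - k_j = \xi\,|g'(k_j)| = |g'(k_j)|/L$, which rearranges to the identity $|g'(k_j)| = L\,|k_{j+1} - k_j|$. Feeding this into the $L$-Lipschitz bound on the gradient, $|g'(k_{j+1}) - g'(k_j)| \leq L\,|k_{j+1} - k_j|$, and applying the triangle inequality yields the key estimate $|g'(k_{j+1})| \leq 2L\,|k_{j+1} - k_j|$, i.e. the gradient at the new iterate is controlled by the step just taken.

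Second I would apply the descent lemma to the pair $(k_{j+1},k_j)$, giving $g(k_j) - g(k_{j+1}) \leq g'(k_{j+1})(k_j - k_{j+1}) + \tfrac{L}{2}|k_j - k_{j+1}|^2$. Bounding the linear term by $|g'(k_{j+1})|\,|k_j - k_{j+1}|$ and substituting the estimate $|g'(k_{j+1})| \leq 2L\,|k_j - k_{j+1}|$ collapses the right-hand side to $\tfrac{5L}{2}|k_j - k_{j+1}|^2$. Invoking the hypothesis $|g(k_j) - g(k_{j+1})| > \zeta$ and solving the resulting inequality $\zeta < \tfrac{5L}{2}|k_j - k_{j+1}|^2$ for the step length gives exactly $|k_j - k_{j+1}| > \sqrt{2\zeta/(5L)}$.

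The calculation itself is routine; the care lies entirely in the boundary bookkeeping. The clean identity $|g'(k_j)| = L\,|k_{j+1}-k_j|$ presupposes an \emph{unclipped} update, so when the projection is active at $k_{j+1} = k_{\max}$ the step is shortened and the estimate $|g'(k_{j+1})| \leq 2L\,|k_j - k_{j+1}|$ need not hold; there I would fall back on the raw descent-lemma bound and state the conclusion directly in terms of $g(k_j) - g(k_{\max})$, as a separate case. I would also check the symmetric situation $g(k_j) < g(k_{j+1})$ by applying the descent lemma with the two points exchanged, confirming that the same quadratic bound $|g(k_j)-g(k_{j+1})| \leq \tfrac{5L}{2}|k_j-k_{j+1}|^2$ survives regardless of the sign of the increment. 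This case analysis, not any analytic subtlety, is the only real obstacle.
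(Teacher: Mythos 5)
Your proposal is correct and follows essentially the same route as the paper's proof: the exact identity $|g'(k_j)| = L\,|k_{j+1}-k_j|$ from the unclipped update, the Lipschitz-plus-triangle-inequality bound $|g'(k_{j+1})| \leq 2L\,|k_{j+1}-k_j|$, the descent lemma collapsing to $\tfrac{5L}{2}|k_j-k_{j+1}|^2$, and the separate treatment of the clipped case $k_{j+1}=k_{\max}$. Your explicit remark that the projection being active invalidates the key gradient estimate matches (and slightly sharpens the justification for) the paper's own closing comment on that boundary case.
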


\begin{corollary}
\label{col:min_delk_convex}
Suppose \texttt{ProjGD} is used with stopping criterion 1 (Definition~\ref{def:gd_stop_criterion1}). If $|k_j - k_{j+1}| \leq \frac{\delta}{\sqrt{5} L k_{\max}}$, then $|g(k_j) - g(k_{j+1})| \leq \frac{\delta^2}{2 L k^2_{\max}}$, when $k_{j+1} < k_{\max}$.
\end{corollary}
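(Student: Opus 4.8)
The plan is to obtain this corollary directly as the contrapositive of Lemma~\ref{lemma:projgd_min_delk}, specialized to the threshold value fixed by stopping criterion~1. No new machinery is needed; the entire content is a one-line logical inversion of an implication already established, together with one algebraic substitution.

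First I would note that stopping criterion~1 (Definition~\ref{def:gd_stop_criterion1}) sets $\zeta = \frac{\delta^2}{2 L k_{\max}^2}$, and that the iterates here are generated by \texttt{ProjGD} with the step size $\xi = \frac{1}{L}$ for which Lemma~\ref{lemma:projgd_min_delk} is stated. I would then substitute this value of $\zeta$ into the step-length lower bound $\sqrt{\tfrac{2\zeta}{5L}}$ supplied by the interior branch of Lemma~\ref{lemma:projgd_min_delk}, which simplifies to
\[
\sqrt{\frac{2\zeta}{5L}} = \sqrt{\frac{\delta^2}{5 L^2 k_{\max}^2}} = \frac{\delta}{\sqrt{5}\, L\, k_{\max}}.
\]

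With this choice, Lemma~\ref{lemma:projgd_min_delk} asserts that for any interior pair of iterates (i.e.\ $k_{j+1} < k_{\max}$), the condition $|g(k_j) - g(k_{j+1})| > \tfrac{\delta^2}{2 L k_{\max}^2}$ forces $|k_j - k_{j+1}| > \tfrac{\delta}{\sqrt{5}\,L\,k_{\max}}$. Taking the contrapositive of this implication immediately yields the desired statement: whenever $|k_j - k_{j+1}| \leq \tfrac{\delta}{\sqrt{5}\,L\,k_{\max}}$, we must have $|g(k_j) - g(k_{j+1})| \leq \tfrac{\delta^2}{2 L k_{\max}^2}$. Note that convexity plays no role, since neither the lemma nor the present claim uses it — only the $L$-Lipschitz gradient property and the \texttt{ProjGD} update rule enter.

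Since the argument is a direct logical inversion, there is no genuinely hard step; the only point demanding attention is the hypothesis $k_{j+1} < k_{\max}$. Lemma~\ref{lemma:projgd_min_delk} furnishes the clean bound $\sqrt{\tfrac{2\zeta}{5L}}$ only in the interior case, because the descent-lemma inequality \eqref{eqn:delk_interior} that underlies it can break down when the projected iterate saturates at the boundary $k_{\max}$ (where the lemma instead gives the separate boundary bound $\sqrt{\tfrac{2}{5L}(g(k_j)-g(k_{\max}))}$). Restricting to $k_{j+1} < k_{\max}$ precisely avoids that degenerate branch, so invoking Lemma~\ref{lemma:projgd_min_delk} requires no further care and the corollary follows at once.
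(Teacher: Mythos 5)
Your proposal is correct and follows essentially the same route as the paper: the paper's proof is precisely the substitution $\zeta = \frac{\delta^2}{2Lk_{\max}^2}$ into Lemma~\ref{lemma:projgd_min_delk}, with the contrapositive (which you make explicit) left implicit. Your algebra $\sqrt{2\zeta/(5L)} = \frac{\delta}{\sqrt{5}\,L\,k_{\max}}$ and your remark on why the interior case $k_{j+1} < k_{\max}$ is needed are both accurate.
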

\begin{proof}
The result follows by substituting $\zeta = \frac{\delta^2}{2 L k^2_{\max}}$ in Lemma~\ref{lemma:projgd_min_delk}.
\end{proof}

\begin{corollary}
\label{col:min_delk_nonconvex}
Suppose \texttt{ProjGD} is used with stopping criterion 2 (Definition~\ref{def:gd_stop_criterion2}). If $|k_j - k_{j+1}| \leq \sqrt{\frac{ 2\delta}{5 L}}$, then $|g(k_j) - g(k_{j+1})| \leq \delta$, when $k_{j+1} < k_{\max}$.
\end{corollary}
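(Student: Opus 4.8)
The plan is to derive this corollary directly as the contrapositive of Lemma~\ref{lemma:projgd_min_delk} with the choice $\zeta = \delta$, mirroring the proof of Corollary~\ref{col:min_delk_convex} (which used $\zeta = \frac{\delta^2}{2Lk_{\max}^2}$).

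First I would isolate the unconditional estimate that is already established inside the proof of Lemma~\ref{lemma:projgd_min_delk}. For a pair of successive \texttt{ProjGD} iterates with $k_{j+1} < k_{\max}$ and step size $\xi = \frac{1}{L}$, the Lipschitz-gradient bound \eqref{eqn:bounded_derivative_kjplus1} combined with the descent lemma yields the right-hand inequality of \eqref{eqn:delk_interior}, namely
\begin{equation}
    |g(k_j) - g(k_{j+1})| \leq \frac{5L}{2}\,|k_j - k_{j+1}|^2.
\end{equation}
This holds purely from the \texttt{ProjGD} update rule and the $L$-Lipschitz property of $g'$, with no appeal to any stopping criterion.

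Then I would substitute the hypothesis $|k_j - k_{j+1}| \leq \sqrt{\tfrac{2\delta}{5L}}$ into this bound to obtain $|g(k_j) - g(k_{j+1})| \leq \tfrac{5L}{2}\cdot\tfrac{2\delta}{5L} = \delta$, which is exactly the claim. Equivalently, one may set $\zeta = \delta$ in Lemma~\ref{lemma:projgd_min_delk} and simply read off its contrapositive.

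Since the result is a one-line consequence of the lemma, there is no genuine obstacle. The only point requiring attention is the restriction $k_{j+1} < k_{\max}$: this keeps us on the interior branch of Lemma~\ref{lemma:projgd_min_delk}, for which the quadratic bound above is valid. At the boundary $k_{j+1} = k_{\max}$ the step may be truncated by the projection onto $[1,k_{\max}]$, so the $\tfrac{5L}{2}$ estimate can fail, which is precisely why the corollary is stated only for the interior case.
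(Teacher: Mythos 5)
Your proposal is correct and takes essentially the same route as the paper: the paper's own proof is exactly the substitution $\zeta = \delta$ in Lemma~\ref{lemma:projgd_min_delk} read in contrapositive form, which you state explicitly, and your unpacking of the intermediate bound $|g(k_j) - g(k_{j+1})| \leq \frac{5L}{2}|k_j - k_{j+1}|^2$ from \eqref{eqn:delk_interior} is just that lemma's internal estimate made visible. Your remark on why the restriction $k_{j+1} < k_{\max}$ is needed (projection truncation invalidating the quadratic bound at the boundary) matches the caveat noted at the end of the lemma's proof.
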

\begin{proof}
The result follows by substituting $\zeta = \delta$ in Lemma~\ref{lemma:projgd_min_delk}.
\end{proof}

\begin{remark}
  Corollaries~\ref{col:min_delk_convex} and \ref{col:min_delk_nonconvex} imply that \texttt{ProjGD} terminates in a finite number of iterations depending on the stopping criteria.
\end{remark}

\subsection{Lemmas: Perturbation}
\label{apndx:lemmas_perturb}

\begin{lemma}
\label{lemma:max_perturb}
\textit{\texttt{Perturb} routine does not miss stationary points:}
Suppose $k_j$ meets the stopping criterion 2 (Definition~\ref{def:gd_stop_criterion2}). If the perturbation $\Delta k \leq
\sqrt{\frac{2\delta}{L}}$, then the \texttt{Perturb} routine does not miss any stationary points with an error greater than $\delta$.
\end{lemma}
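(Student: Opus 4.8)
The plan is to show that the discrete minimum $\hat g$ that \texttt{Perturb} tracks over its sampled grid $\{k_i^-, k_i^-+\Delta k, \ldots, k_{i+1}\}$ underestimates the true continuous minimum $\min_{k\in[k_i^-,k_{i+1}]} g(k)$ by at most $\delta$; equivalently, for every stationary point $k^\star$ the routine steps past, the nearest sampled value lies within $\delta$ of $g(k^\star)$. First I would dispose of the trivial case: if the continuous minimizer over the interval is attained at an endpoint, then since both $k_i^-$ and $k_{i+1}$ are themselves grid points, $\hat g$ equals that value exactly and nothing is to prove. The substantive case is an interior minimizer, where first-order optimality forces $g'(k^\star)=0$.

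The heart of the argument is a single application of the descent lemma, available because $g'$ is $L$-Lipschitz on $[1,k_{\max}]$ (Definition~\ref{def:g_4_lemmas}). Let $k_m$ be the grid point immediately below $k^\star$, so $k^\star\in[k_m,k_m+\Delta k]$ and hence $|k_m-k^\star|\le\Delta k$. Applying the descent lemma at the base point $k^\star$ and using $g'(k^\star)=0$ gives
\begin{equation}
g(k_m)\le g(k^\star)+g'(k^\star)(k_m-k^\star)+\tfrac{L}{2}|k_m-k^\star|^2 = g(k^\star)+\tfrac{L}{2}|k_m-k^\star|^2. \nonumber
\end{equation}
Bounding $|k_m-k^\star|\le\Delta k$ and substituting the prescribed step $\Delta k=\sqrt{2\delta/L}$ collapses the quadratic term exactly to $\delta$, so $g(k_m)-g(k^\star)\le\tfrac{L}{2}\Delta k^2=\delta$. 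Since $k_m$ is a sampled point, $\hat g\le g(k_m)\le g(k^\star)+\delta$, which is the claimed bound. I would also note that the clamping to $k_{\max}$ in the grid construction only shortens the final increment, which can only decrease $|k_m-k^\star|$, so the bound $|k_m-k^\star|\le\Delta k$ is preserved everywhere.

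The step I expect to require the most care is justifying that \texttt{Perturb} actually lays down grid points with the full increment $\Delta k$ across the flat neighborhood of $k^\star$, rather than terminating its \texttt{while} loop before $k^\star$ is bracketed. This is where the hypothesis that $k_j$ meets stopping criterion~2 (Definition~\ref{def:gd_stop_criterion2}) enters: it certifies that we have entered \texttt{Perturb} in a region where a single $\xi=1/L$ probe step changes $g$ by less than $\delta$, the complement of the exit rule in Definition~\ref{def:prtrb_stop_criterion}. I would argue that near $k^\star$ the estimate $|g'(k)|=|g'(k)-g'(k^\star)|\le L|k-k^\star|$ keeps the probe step small, so $|g(k)-g(k')|<\delta$ persists and the loop keeps advancing by $\Delta k$, necessarily depositing a grid point within $\Delta k$ of $k^\star$ before any exit is triggered. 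Should the exit nonetheless fire, the hand-off back to \texttt{ProjGD} together with the no-crossing property (Lemma~\ref{lemma:gd_nocross_stationary}) guarantees $k^\star$ is not skipped, so the discrete grid still brackets it and the descent-lemma estimate applies unchanged.
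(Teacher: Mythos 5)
Your proposal is correct and rests on exactly the same key step as the paper's proof: applying the descent lemma with the stationary point as the base point (so the linear term vanishes), which bounds the function-value gap to a grid point by $\tfrac{L}{2}\Delta k^2 \le \delta$ when $\Delta k \le \sqrt{2\delta/L}$. The paper states this as a contrapositive separation bound (points whose values differ by at least $\delta$ from the nearest stationary point must be at least $\sqrt{2\delta/L}$ away) while you argue it directly and add some loop-bookkeeping details, but the substance is identical.
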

\begin{proof}
Let $k_j, k_{j+1}$ be any two points in $[1, k_{\max}]$. Since $g'(\cdot)$ is $L$-Lipschitz, using descent lemma \cite{bertsekas1997nonlinear} on $g(\cdot)$ at $k_j$ and $k_{j+1}$, we obtain
    \begin{equation}
        \label{eqn:descent_lemma1}
        g(k_j) - g(k_{j+1}) \leq g'(k_{j+1}) (k_j - k_{j+1}) + \frac{L}{2} |k_j - k_{j+1}|^2.
    \end{equation}

Let $k_{j+1}$ be the closest stationary point to $k_j$, then:
\begin{equation}
    |g(k_j) - g(k_{j+1})| \leq \frac{L}{2} |k_j - k_{j+1}|^2.
\end{equation}
Therefore, the following condition is necessary for the \textit{stopping criterion 2}, i.e., $|g(k_j) - g(k_{j+1})| \geq \delta$ (Definition~\ref{def:gd_stop_criterion2}) to hold:
\begin{equation}
     \Delta k = |k_j - k_{j+1}| \geq \sqrt{\frac{2\delta}{L}}.
\end{equation}
\end{proof}

\begin{lemma}
\label{lemma:finite_stationary}
\textit{
Upper bound on the number of stationary points:} Consider a set of stationary points $\{k_s\}$ of $g(\cdot)$ in $[1,k_{\max}]$ such that for every $k_s$, the adjacent stationary point $k_{s+1}$, $|g(k_s)-g(k_{s+1})| \geq \delta$. The number of such stationary points is finite and bounded above as $\lceil k_{\max}L/\delta \rceil$.
\end{lemma}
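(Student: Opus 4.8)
The plan is to bound the number of stationary points $\{k_s\}$ in $[1,k_{\max}]$ by exploiting two facts: the hypothesis that consecutive stationary points differ in function value by at least $\delta$, and the $L$-Lipschitz property of the gradient $g'(\cdot)$. First I would pair up consecutive stationary points $k_s$ and $k_{s+1}$ and estimate how large the interval $|k_{s+1}-k_s|$ must be to accommodate a function-value change of at least $\delta$. Between two adjacent stationary points the derivative $g'(\cdot)$ vanishes at both endpoints, so I can apply the descent lemma (as in Lemma~\ref{lemma:max_perturb}) at the pair $(k_s,k_{s+1})$, where one of the two is a stationary point, to obtain
\begin{equation}
    |g(k_s) - g(k_{s+1})| \leq \tfrac{L}{2}\,|k_s - k_{s+1}|^2. \nonumber
\end{equation}

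Combining this with the hypothesis $|g(k_s)-g(k_{s+1})|\geq\delta$ gives a lower bound on each gap, namely $|k_{s+1}-k_s|\geq\sqrt{2\delta/L}$. The next step is to sum these gaps over all consecutive pairs: since the stationary points are distinct and ordered within the interval $[1,k_{\max}]$ of length $k_{\max}-1 \leq k_{\max}$, the total length consumed by the gaps cannot exceed $k_{\max}$. If there are $M$ stationary points, then there are $M-1$ gaps, so $(M-1)\sqrt{2\delta/L}\leq k_{\max}$, which yields $M \leq 1 + k_{\max}\sqrt{L/(2\delta)}$, and hence a finite bound.

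I note that the bound claimed in the statement, $\lceil k_{\max}L/\delta\rceil$, is somewhat looser than the sharper estimate $1 + k_{\max}\sqrt{L/(2\delta)}$ that the gap argument produces directly; the plan is therefore to verify that the claimed expression dominates the sharper one in the relevant parameter regime (for the relevant ranges of $L$, $\delta$, and $k_{\max}$) so that the stated inequality still holds, or alternatively to re-derive the gap using the spacing bound $\sqrt{2\delta/L}$ and simply weaken to the stated form. The main obstacle I expect is precisely this reconciliation between the quadratic-in-gap estimate, which naturally produces a $\sqrt{L/\delta}$ dependence, and the linear $L/\delta$ dependence asserted in the lemma; I would resolve it by carefully tracking whether the intended spacing bound is linear (e.g.\ $\delta/L$ via a first-order rather than second-order argument) versus the square-root spacing that the descent lemma gives, and then choosing the weaker of the two to match the stated conclusion. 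The finiteness itself, which is the essential content, follows immediately from any positive lower bound on the inter-stationary-point spacing together with compactness of $[1,k_{\max}]$.
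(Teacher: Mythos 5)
Your proposal matches the paper's proof essentially verbatim: the paper likewise invokes the descent-lemma estimate from the proof of Lemma~\ref{lemma:max_perturb} to conclude that adjacent stationary points are separated by at least $\sqrt{2\delta/L}$, and then counts gaps in $[1,k_{\max}]$ to bound the number of stationary points by $\lceil k_{\max}\sqrt{L/(2\delta)}\rceil$. The discrepancy you flagged is real and correctly resolved by your domination check: the paper's own argument yields the sharper square-root bound, and the looser $\lceil k_{\max}L/\delta\rceil$ stated in the lemma follows from it precisely when $\sqrt{L/(2\delta)}\le L/\delta$, i.e.\ when $\delta\le 2L$, which holds in the intended regime of a small tolerance $\delta$.
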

\begin{proof}
From the proof of Lemma \ref{lemma:max_perturb}, it follows that if $|g(k_s)-g(k_{s+1})| \geq \delta$, then the stationary points are separated by at least $|k_s-k_{s+1}| \geq \sqrt{\frac{2 \delta}{L}}$. Therefore, the number of stationary points in $[1, k_{\max}]$ is at most $\lceil k_{\max} \sqrt{\frac{L}{2 \delta}} \rceil$.
\end{proof}

\bibliography{bibfile}

\end{document}